\newtheorem{theorem}{Theorem}
\begin{document}
\date{\today}

\title{Decomposition of High-Rank Factorized Unitary Coupled-Cluster Operators \\Using Ancilla and Multi-Qubit Controlled Low-Rank Counterparts}

\author{Luogen Xu}
\affiliation{Department of Physics, Georgetown University\\ 37$^{\rm th}$ and O Sts. NW, Washington, DC 20057 USA}

\author{Joseph T. Lee}
\affiliation{Department of Applied Physics and Mathematics, Columbia University, 500 W. 120th St., New York, NY 10027 USA}

\author{J. K Freericks}
\affiliation{Department of Physics, Georgetown University\\ 37$^{\rm th}$ and O Sts. NW, Washington, DC 20057 USA}

\begin{abstract}
The factorized form of the unitary coupled-cluster (UCC) approximation is one of the most promising methodologies to prepare trial states for strongly correlated systems within the variational quantum eigensolver (VQE) framework.  The factorized form of the UCC ansatz can be systematically applied to a reference state to generate the desired entanglement. The difficulty associated with such an approach is the requirement of simultaneously entangling a growing number of qubits, which quickly exceeds the hardware limitations of today's quantum machines. In particular, while circuits for singles and double excitations can be performed on current hardware, higher-rank excitations require too many gate operations. In this work, we propose a set of new schemes that trade off using extra qubits for a reduced gate depth to decompose high-rank UCC excitation operators into significantly lower depth circuits. These results will remain useful even when fault-tolerant machines are available to reduce the overall state-preparation circuit depth.
\end{abstract}

\maketitle

\section{Introduction}
Efficiently simulating quantum many-body systems on quantum hardware is one of the major goals of quantum computation and many algorithms already exist \cite{preskill_2018, aspuru-guzik_2005, lloyd_1996, lee_huggins_head-gordon_whaley_2018} for this problem. For weakly correlated systems seen in many quantum chemistry systems, there is a hierarchy to the amplitudes of the determinants in the expansion of the ground-state wavefunction---low-rank excitations from the reference state typically have larger amplitudes than higher-rank excitations. But, generically, many determinants are still needed to achieve chemical accuracy even with this hierarchy. When the number of electrons and spin orbitals is small enough, the molecule can be treated by exact diagonalization, which is called full configuration interaction (FCI) \cite{fci} in the chemistry field. But very few systems can be treated this way on classical computers due to the exponential growth of the Hilbert space scaling like $\mathcal{O}(2^N)$. Truncating the Hilbert space to include the most important many-body basis states is called the configuration interaction (CI) method. But it suffers from not being size consistent, which affects the accuracy when molecules are stretched close to the dissociation limit. Instead, the coupled cluster (CC) method \cite{bartlett_2007} provides high precision, is size consistent, and is lean on memory usage, because it does not explicitly construct the wavefunction. The CC method scales like  $\mathcal{O}(N^{10})$, when including singles, doubles, triples, and quadruples. 

The variational quantum eigensolver (VQE) algorithm relies on the variational principle of quantum mechanics to estimate the ground-state energy of a molecule \cite{vqe}. While the VQE can be used for physical systems in condensed matter and other fields of physics, the main application is in quantum chemistry. Classical quantum chemistry methods boast high accuracy, but can be expensive. Thus, one active area of research is in leveraging quantum technology to calculate the ground-state energy of molecules.

Picking the proper wavefunction ansatz is one of the more difficult challenges in not only using VQE for quantum chemistry, but also in other approaches where a trial wavefunction is needed \cite{stair_evangelista_2021}. The classical coupled-cluster approximation applies an exponential operator to the reference state (typically the Hartree-Fock wavefunction). In conventional CC calculations, one applies the coupled-cluster operator as a similarity transformation of the Hamiltonian. Because the Hamiltonian only contains single and two-particle interactions, the power series expansion of the similarity transformation truncates after at most four-fold nested commutator terms, which proves efficient when carried out on classical computers. However, most operations applicable to quantum machines must be unitary. This suggests using the unitary coupled cluster (UCC) ansatz \cite{bartlett_kucharski_noga_1989, schaefer_2013}, which includes a sum of excitations minus de-excitations, to have a unitary operator applied to the reference state.  Unfortunately, the similarity transformation of the Hamiltonian under the UCC ansatz does not truncate after a small number of terms. Strategies used to evaluate it on classical computers include truncating the series at a fixed order \cite{bartlett_kucharski_noga_1989}, expanding the exponential operator in a power series and truncating the series when higher-order terms no longer change the wavefunction \cite{evangelista_2011}, and using an exact operator identity of the factorized form of the UCC to allow the wavefunction to be constructed in a tree structure \cite{chen_cheng_freericks_2021}. If $\hat{T}$ is the operator that is exponentiated in the traditional CC approach and applied to the reference state, \textit{i.~e.,} $\ket{\psi_{CC}} = \exp( \hat{T})\ket{\psi_0}$, the corresponding UCC ansatz is the unitary variant, given by $\ket{\psi_{UCC}} = \exp(\hat{T} - \hat{T}^{\dagger})\ket{\psi_0}$. Note that the $\hat{T}$ operator involves fermionic destruction operators for the real orbitals (present in the reference state) and an equal number of fermionic creation operators for the virtual operators (corresponding to orbitals that can be occupied in the expansion of the wavefunction). So, the $\hat{T}$ operator excites the reference state. The operator $\hat{T}^\dagger$ annihilates against the reference state, but it can be nonzero when it acts on other determinants in the expansion for the wavefunction. The standard way to implement the UCC approximation is to  exponentiate the sum of all the different excitation and de-excitation operators $\hat{T} - \hat{T}^\dagger$ via
\begin{equation}
    \hat U_{UCC} =e^{\hat{T}-\hat{T}^\dagger}= e^{\sum \theta_{ijk\cdots}^{abc\cdots}\big[\hat A_{ijk\cdots}^{abc\cdots} - \big({\hat A_{ijk\cdots}^{abc\cdots}\big)^\dagger}\big]},
    \label{eq:fulucc}
\end{equation}
where we define the excitation operators as 
\begin{equation}
    \hat A_{ijk\cdots}^{abc\cdots} = \hat a_a^\dagger \hat a_b^\dagger \hat a_c^\dagger \cdots \cdots \hat a_k \hat a_j \hat a_i.
    \label{eq:op_def}
\end{equation}
Here, $a,b,c,\cdots$ are the indices for the unoccupied (virtual) spin-orbitals, and $i,j,k,\cdots$ are the indices for the occupied (real) spin orbitals and we use the standard second quantization notation for the fermionic  creation and destruction operators; note that in each $\hat{T}$ operator, all creation operators and all destruction operators are selected from the virtual orbitals or the real orbitals, respectively, and \textit{vice versa} for the $\hat{T}^\dagger$ operators. 

It is important to note that carrying out a UCC calculation exactly using this method is challenging as quantum circuits for the exponential of the sum of unitaries are complicated \cite{childs_kothari_somma_2017}. The other method is to write down the ansatz in a factorized form, given by
\begin{equation}
    \hat U^\prime _{UCC} = \prod e^{\theta_{ijk\cdots}^{abc\cdots}\big[\hat A_{ijk\cdots}^{abc\cdots} - \big({\hat A_{ijk\cdots}^{abc\cdots}\big)^\dagger}\big]}.
    \label{eq:factorizeducc}
\end{equation}
Unlike the case where we exponentiate a sum of unitaries, the factorized ansatz is not uniquely determined because many of the elementary factors do not commute leading to different results based on the ordering of the different factors. Despite this, the factorized form is still a promising approach for applying the UCC ansatz on NISQ machines. One reason is it can be implemented with relatively simple circuits. Using the Jordan-Wigner transformation \cite{jordan_wigner_1928, nielsen_2005}, we convert the $\hat{A}$ and $\hat{A}^\dagger$ terms  into sums of products of Pauli strings; one can immediately map the exponential of such operators into a gated circuit. The circuit requires many CNOT gate cascades \cite{barkoutsos_2018, romero_babbush_mcclean_hempel_love_aspuru-guzik_2018}, which will lead to low fidelity performance on current quantum hardware. Reducing the CNOT count of the factorized form of the UCC ansatz could potentially allow for the use of NISQ hardware for quantum chemistry calculations.

Traditional quantum chemistry focuses primarily on singles and doubles excitations in CC, but as the correlations grow, it is anticipated that higher-rank excitations will be needed to accurately represent the wavefunction. Within the classical computational chemistry framework, work by Chen et al. \cite{chen_cheng_freericks_2021} created an algorithm using the factorized form of the UCC that produces significantly better results for strongly correlated systems and comparable results in terms of accuracy for weakly correlated systems. In reference \cite{evangelista_chan_scuseria_2019}, Evangelista, et al. have proved that the disentangled (factorized) UCC ansatz is capable of generating arbitrary states. Ref.~\onlinecite{xu_lee_freericks_2020} shows one can create the exact ground state wavefunction for a four-site Hubbard ring (in its natural orbital basis) using a factorized form of the UCC that requires one quadruple excitation and eight double excitations. Although the circuit depth for such a state preparation procedure is comparatively low, the one quadruple factor requires about half the gate counts for the circuit (being about an order of magnitude more gates than one doubles factor). In this work, we introduce a decomposition method that greatly reduces the gate count of costly high-rank UCC factors (such as the quadruple excitation aforementioned) into lower-rank factors. 

It is important to mention that the method proposed by this paper is predicated on the fact that the fermion-to-qubit mapping used by the circuit from reference \cite{evangelista_chan_scuseria_2019} is the Jordan-Wigner encoding. It is not universally applicable to other encodings. However, one should be able to generalize the approach given here to other fermion encodings, if desired.

\section{Background}
\subsection{Classical coupled-cluster approach}
A set of electronic excitation operators can be defined as follows \cite{helgaker_2014}:
\begin{equation}
    \hat T = \sum_{i=1}^N \hat T_i 
\end{equation}
Explicitly, the first two ranks (orders) are
\begin{align}
    \hat T_1 = \sum_{ia} \theta_i ^a \hat a_a ^\dagger \hat a_i  = \sum_{ia} \theta_i^a \hat A_i^a\\
    \hat T_2 = \sum_{ijab} \theta_{ij}^{ab} \hat a_a ^\dagger \hat a_b ^\dagger \hat a_i \hat a_j = \sum_{ijab} \theta_{ij}^{ab} \hat A_{ij}^{ab}
\end{align}
where $\hat a_a^{\dagger}$ is the fermionic creation operator on virtual orbital $a$ and $\hat a_i$ is the fermionic anihilation operator on real orbital $i$, and they obey the anti-commutation relations as follows:
\begin{equation}\label{ccr}
    \{\hat a_i, \hat a_j\} = 0; \{\hat a_i^{\dagger}, \hat a_j^{\dagger}\} = 0; \{\hat a_i, \hat a_j^\dagger\} = \delta_{ij} 
\end{equation}
where $\{A, B\}=AB+BA$ and $\delta_{ij}$ is the Kronecker delta function.Note that for $\hat T_2$ and higher-rank operators, different ordering of the indices ${ijab}$ can be used but in this work, we will only be using one ordering of the indices for each equivalent term. A coupled-cluster singles and doubles (CCSD) wavefunction is given by an exponential of the excitations acting on a reference state (Hartree-Fock solution)
\begin{equation}
    \ket{\psi_{CCSD}} = e^{\hat T_{CCSD}}\ket{\psi_0} = e^{\hat T_1 +\hat T_2}\ket{\psi_0}
\end{equation}
We compute the energy by first projecting the Schrodinger equation $H\ket{\psi_{CCSD}}=E\ket{\psi_{CCSD}}$ onto the HF reference $\bra{\psi_0}$:
\begin{equation}
    E = \bra{\psi_0}e^{-\hat T_{CCSD}} H e^{\hat T_{CCSD}}\ket{\psi_0}
\end{equation}
We then project against a set of states $\{\bra{\psi_\mu}\}$ that covers the entire space generated by $\hat T_{CCSD}$ acting on the reference state \cite{bartlett_2007, helgaker_2014}. The problem is solved by solving a set of non-linear amplitude equations:
\begin{align}
    E = \bra{\psi_0}e^{-\hat T_{CCSD}} H e^{\hat T_{CCSD}}\ket{\psi_0} \label{ccsdeq1} \\
    0 = \bra{\psi_\mu}e^{-\hat T_{CCSD}} H e^{\hat T_{CCSD}}\ket{\psi_0}. \label{ccsdeq2}
\end{align}
The cost of solving these equations scales as $\mathcal{O}(\eta^2(N-\eta)^4)$, where $\eta$ is the number of electrons and $N$ is the number of spin orbitals in the system. Note the number of amplitude equations is given by the number of amplitudes in the expansion of the $\hat{T}$ operator, which is a much smaller number than the total number of determinants in the CC wavefunction.

It is convenient that the operator $e^{-\hat T_{CCSD}}H e^{\hat T_{CCSD}}$, also known as the similarity transformed Hamiltonian, is additively separable. Combined with the fact that the exponential of the excitation $e^{\hat T_{CCSD}}$ is multiplicatively separable, the CCSD ansatz is size-consistent. As mentioned previously, classical coupled-cluster theory solves the lack of size-consistency of the truncated CI wavefunctions. Recall that the Hamiltonian in second quantization is
\begin{equation}
    H = \sum_{ij}h_{ij}\hat a_i ^\dagger \hat a_j + \frac{1}{2}\sum_{ijkl}g_{ijkl}\hat a_i ^\dagger \hat a_j ^\dagger \hat a_k \hat a_l,
\end{equation}
where $h_{ij}$ are the one-electron integrals, and $g_{ijkl}$ are the two-electron integrals, given by
\begin{align}
    h_{ij} = \int dr_1 \phi_i^* (r_1) \Bigg(-\frac{1}{2}\nabla_{r_1} ^2 - \sum_{I=1} ^M \frac{Z_I}{R_{1I}} \Bigg)\phi_j (r_1) \\
    g_{ijkl} = \int dr_1 dr_2 \phi_i^*(r_1)\phi_j^*(r_2)\frac{1}{r_{12}}\phi_k(r_1)\phi_l(r_2).
\end{align}
Here, $M$ is the number of atoms in the system, $Z_I$ are atomic numbers, $R_{1I} = \abs{r_1 - R_I}$, $r_{12} = \abs{r_1 - r_2}$, and $\phi(r)$ are mean-field solutions such as HF \cite{szabo_ostlund_2006, taketa_huzinaga_o-ohata_1966}. 
A general similarity transformed Hamiltonian can be expanded using the Hadamard lemma, and it truncates after the fourth term $\frac{1}{24}[[[[H,T],T],T],T]$ due to the Hamiltonian having only one- and two-body interaction terms \cite{bartlett_2007}. However, when acting on a multi-reference state, which is often needed for strongly correlated systems, the calculational procedure often becomes problematic.

\subsection{Unitary coupled-cluster and disentangled ucc factors}
The unitary variant of the CC method is defined as follows \cite{schaefer_2013, bartlett_kucharski_noga_1989}:
\begin{equation}
    \ket{\psi_{UCC}} = e^{\hat T-\hat T^\dagger}\ket{\psi_0}
\end{equation}
The UCC method computes the energy using the variational principle:
\begin{equation}
    E = \min_{\Vec{\theta}}\frac{\bra{\psi_0}e^{-(\hat T-\hat T^\dagger)}He^{\hat T-\hat T^\dagger}\ket{\psi_0}}{\braket{\psi_{UCC}}{\psi_{UCC}}}
\end{equation}
which requires us to work with the explicit wavefunction or to determine the similarity transformation of the Hamiltonian. This approach is always variational, is size-consistent, and often can be extended to multireference situations. However, the Hadamard lemma expansion of its similarity transformed Hamiltonian no longer truncates after just four terms \cite{taube_bertlett_2006, kutzelnigg_1991}. Although the UCC ansatze are challenging to carry out on a classical computer, a quantum computer can efficiently apply a UCC operator in its factorized form \cite{vqe, sokolov_2020}. 

To implement the UCC ansatz on a quantum machine requires Trotterization as the excitation operators do not necessarily commute:
\begin{equation}
    e^{\hat T-\hat T^\dagger} = e^{\sum_i \theta_i (\hat A_i-\hat A_i^\dagger)} \approx \Bigg(\prod_i e^{\frac{\theta_i}{n} (\hat A_i- \hat A_i^\dagger)} \Bigg)^n
    \label{eq:trotter}
\end{equation}
where $\theta_i$ is the amplitude associated with the excitation operator $\hat A_i$ and $\hat A_i^\dagger$. In the case where $n=1$, we can write the UCC ansatz as:
\begin{equation}
    \ket{\psi_{UCC}'} = \prod_i e^{\theta_i (\hat A_i-\hat A_i^\dagger)} \ket{\psi_0}
\end{equation}
where a UCC factor is then of the form $e^{\theta_i (\hat A_i-\hat A_i^\dagger)}$. One can think of this either as a crude approximation to the Trotter product or as a new factorized form of the UCC ansatz. It is important to note that this ansatz is not unique---different orderings leads to different wavefunctions when the re-ordered factors do not commute with each other.

\subsection{SU(2) identity for single UCC factors}
A single UCC factor has a hidden SU(2) identity that exactly determines the exponential of the operator~\cite{evangelista_chan_scuseria_2019,xu_lee_freericks_2020,chen_cheng_freericks_2021}. The identity follows by simply calculating powers of the exponent. We first note that
\begin{gather}
    (\hat A + \hat A^\dagger)^2 = \hat{A}\hat A^\dagger + \hat A^\dagger \hat{A} \nonumber \\
    = \hat n_{a_1}\hat n_{a_2}\cdots\hat n_{a_n}(1-\hat n_{i_1})(1-n_{i_2})\cdots(1-\hat n_{i_n}) \nonumber\\ 
    +(1-\hat n_{a_1})(1-\hat n_{a_2})\cdots(1-\hat n_{a_n})\hat n_{i_1}\hat n_{i_2}\cdots\hat n_{i_n}, 
    \label{eq:square_identity}
\end{gather}
because $\{i,j,k,\cdots\}$ and $\{a,b,c,\cdots\}$ are disjoint sets. Here, $\hat n_\alpha = \hat a_\alpha ^\dagger \hat a_\alpha$ is the number operator for spin-orbital $\alpha$.  The cubed term can then be simplified to be
\begin{align}
    (\hat A+\hat A^\dagger)^3 =\hat A\hat A^\dagger\hat A+\hat A^\dagger\hat A\hat A^\dagger=\hat A+\hat A^\dagger,
    \label{eq:cube_identity}
\end{align}
This makes the power-series expansion of the exponential simple: terms with odd powers are proportional to $\hat A + \hat A^\dagger$ and terms with even powers are proportional to Eq. (\ref{eq:square_identity}). We just have to be careful with the zeroth-power term, which is different. Hence, we have
\begin{gather}
    e^{\theta\left [\hat A_{i_1\cdots i_n}^{a_1\cdots a_n}-
    \left (\hat A_{i_1\cdots i_n}^{a_1\cdots a_n}\right )^\dagger\right ]}\nonumber \\
    =\hat I+\sin\theta \left [\hat A_{i_1\cdots i_n}^{a_1\cdots a_n}-
    \left (\hat A_{i_1\cdots i_n}^{a_1\cdots a_n}\right )^\dagger\right ]\nonumber\\
    +(\cos\theta-1)\left [\hat n_{a_1}\hat n_{a_2}\cdots\hat n_{a_n}(1-\hat n_{i_1})(1-n_{i_2})\cdots(1-\hat n_{i_n})\right .\nonumber\\
    \left .+(1-\hat n_{a_1})(1-\hat n_{a_2})\cdots(1-\hat n_{a_n})\hat n_{i_1}\hat n_{i_2}\cdots\hat n_{i_n}\right ].
    \label{eq:identity}
\end{gather}
This identity implies that when a single UCC factor acts on a state that neither $\hat A$ excites nor $\hat A^\dagger$ de-excites, the state is unchanged by the operator. But when the single UCC factor acts on a state that can be excited by $\hat A$ or de-excited by $\hat A^\dagger$, the result is a cosine multiplied by the original state plus a sine multiplied by the excited (or de-excited) state. It is important to note that the identity, Eq. (\ref{eq:identity}) holds for \textit{any} rank of the UCC factor.

\subsection{Exactness of the factorized UCC circuits}

In this section, we will show that the circuit for a UCC doubles factor is exact. The UCC doubles in the factorized form serve as the cornerstone of this study as we aim to decompose the high-rank operators into ones that contain primarily doubles terms.

The factorized form of the double excitation is written as:
\begin{equation}
    \hat U(\theta) = \exp\Big(\frac{\theta_{ijkl}}{2}\big(\hat a_i ^\dagger \hat a_j ^\dagger \hat a_k \hat a_l - \hat a_l ^\dagger \hat a_k ^\dagger \hat a_j \hat a_i\big) \Big)
    \label{eq:double_factorized}
\end{equation}
Here we define the factorized UCC double excitation using the half angle $\theta_{ijkl}/2$ because this facilitates the correct rotation operators $U_\theta$ used in the quantum circuits in later sections. As discussed before, the product of these factors forms a subspace of the full Hilbert space. Although non-unique, if multiplied in a specific order, the product of these factors can be used to create very accurate trial wavefunctions \cite{evangelista_chan_scuseria_2019, xu_lee_freericks_2020}. To implement the UCC factors presented by Eq. (\ref{eq:double_factorized}) on quantum hardware while fully capturing the anti-commutation relations shown in Eq. (\ref{ccr}), we choose to apply the Jordan-Wigner (JW) transformation to write the fermionic operators in terms of Pauli strings \cite{jordan_wigner_1928, nielsen_2005, somma_ortiz_gubernatis_knill_laflamme_2002}: 

\begin{gather}
    \exp\Big(\frac{\theta_{ijkl}}{2}\big(\hat a_i ^\dagger \hat a_j ^\dagger \hat a_k \hat a_l - \hat a_l ^\dagger \hat a_k ^\dagger \hat a_j \hat a_i \big) \Big) = \exp \Bigg( \frac{i\theta_{ijkl}}{16} \bigotimes _{a=l+1} ^{k-1} Z_{a} \nonumber \\
    \bigotimes _{b=j+1} ^{i-1} Z_{b} \times \bigg( X_l  X_k  Y_j  X_i +  Y_l  X_k  Y_j  Y_i \nonumber \\
    + X_l  Y_k  Y_j  Y_i +  X_l  X_k  X_j  Y_i \nonumber \\
    - Y_l  X_k  X_j  X_i -  X_l  Y_k  X_j  X_i \nonumber \\
    - Y_l  Y_k  Y_j  X_i -  Y_l  Y_k  X_j  Y_i\bigg)\Bigg)
    \label{doublefactor_jw}
\end{gather}
Eq. (\ref{doublefactor_jw}) is obtained by applying the JW transformation to Eq. (\ref{eq:double_factorized}) with the convention $\hat a_n =\frac{1}{2}\big( X + i Y \big)\bigotimes Z ^{\bigotimes N-n-1}$ and $\hat a_n ^\dagger = \frac{1}{2}\big( X -i Y \big)\bigotimes Z^{\bigotimes N-n-1}$, where $ X$, $ Y$, and $ Z$ are the Pauli matrices, and $0 \leq n \leq N-1$, $N$ being the number of qubits. The qubit state $|0\rangle$ has no electrons and $|1\rangle$ has one electron.

\begin{table}[h]
\resizebox{\columnwidth}{!}{
\begin{tabular}{|l|l|l|l|l|l|l|l|l|}
\hline
$\{l,k,j,i\}$ & $XXYX$ & $YXYY$ & $XYYY$ & $XXXY$ & $YXXX$ & $XYXX$ & $YYYX$ & $YYXY$ \\ \hline
$XXYX$        & 0    & 2    & 2    & 2    & 2    & 2    & 2    & 4    \\ \hline
$YXYY$        & 2    & 0    & 2    & 2    & 2    & 4    & 2    & 2    \\ \hline
$XYYY$        & 2    & 2    & 0    & 2    & 4    & 2    & 2    & 2    \\ \hline
$XXXY$        & 2    & 2    & 2    & 0    & 2    & 2    & 4    & 2    \\ \hline
$YXXX$        & 2    & 2    & 4    & 2    & 0    & 2    & 2    & 2    \\ \hline
$XYXX$        & 2    & 4    & 2    & 2    & 2    & 0    & 2    & 2    \\ \hline
$YYYX$        & 2    & 2    & 2    & 4    & 2    & 2    & 0    & 2    \\ \hline
$YYXY$        & 4    & 2    & 2    & 2    & 2    & 2    & 2    & 0    \\ \hline
\end{tabular}
}
\caption{Commutation table for all eight 4-qubit Pauli strings from Eq. (\ref{doublefactor_jw}). Integers in each entry count the number of indices that anticommute.}
\label{table:commutation}
\end{table}
In table \ref{table:commutation}, we show that the number of anticommuting indices between the Pauli strings in Eq.~(\ref{doublefactor_jw}) is always even, which implies every Pauli string commutes with every other Pauli string. This means that the exponential of the sum of the eight Pauli strings can be rewritten as eight products of the exponential of each Pauli string. The ordering of the exponential factors is unimportant, because they all commute with each other. Below we provide a proof of this conclusion.

\begin{theorem}
Consider two Pauli strings acting on the same set of qubits,\\
    \begin{equation}
        P_A = \bigotimes_{i=1}^{N}A_i, P_B = \bigotimes_{i=1}^{N}B_i \nonumber
    \end{equation}
where $A_i,B_i \in \{X,Y, Z,I\}$. $P_A$ and $P_B$ commute iff $A_i$ and $B_i$ anticommute on an even number of indices.
\end{theorem}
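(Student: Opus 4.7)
My plan is to reduce the commutation of the full $N$-qubit Pauli strings to a per-index analysis, exploiting the tensor product structure and the fact that single-qubit Paulis either pairwise commute (when one is $I$ or both are identical) or pairwise anticommute.

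First I would observe that for operators on different tensor factors, single-qubit Paulis commute trivially: if $i \neq j$, then $(A_i)_{(i)}$ and $(B_j)_{(j)}$, viewed as operators on the $N$-qubit space with identities on the remaining factors, commute. Consequently the full products collapse index by index:
\begin{equation}
P_A P_B = \bigotimes_{i=1}^{N} A_i B_i, \qquad P_B P_A = \bigotimes_{i=1}^{N} B_i A_i. \nonumber
\end{equation}
Next I would invoke the elementary single-qubit fact: for $A_i, B_i \in \{X, Y, Z, I\}$, either $A_i B_i = B_i A_i$ (when $A_i = B_i$, or either factor is $I$) or $A_i B_i = -B_i A_i$ (when $A_i$ and $B_i$ are two distinct non-identity Paulis). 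Let $S \subseteq \{1,\dots,N\}$ be the set of indices where the latter holds, and let $k = |S|$.

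Pulling the sign out of each tensor factor in $S$ and using multilinearity of the tensor product yields
\begin{equation}
P_A P_B = \bigotimes_{i=1}^{N} A_i B_i = (-1)^{k} \bigotimes_{i=1}^{N} B_i A_i = (-1)^{k} P_B P_A. \nonumber
\end{equation}
From this the theorem follows immediately: the commutator $[P_A, P_B] = (1 - (-1)^k) P_B P_A$ vanishes iff $(-1)^k = 1$, i.e.\ iff $k$ is even. For the converse direction, note that $P_B P_A$ is invertible (it is unitary up to a global phase), so if $P_A P_B = P_B P_A$ then necessarily $(-1)^k = 1$, giving both implications of the "iff."

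There is no real obstacle here; the only subtle point is justifying the collapse of $P_A P_B$ to the single tensor product $\bigotimes_i A_i B_i$, which I would handle by explicitly noting that the tensor product of operators satisfies $(A_1 \otimes \cdots \otimes A_N)(B_1 \otimes \cdots \otimes B_N) = (A_1 B_1) \otimes \cdots \otimes (A_N B_N)$, a standard identity. Everything else is a clean bookkeeping of signs.
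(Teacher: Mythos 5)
Your proof is correct and follows essentially the same route as the paper's: collapse $P_A P_B$ factorwise via the tensor-product identity, note each single-qubit pair either commutes or anticommutes, and extract the sign $(-1)^k$ so that commutation holds iff the number $k$ of anticommuting indices is even. Your explicit remarks on the tensor-product identity and the invertibility of $P_B P_A$ merely make the same argument slightly more rigorous.
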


\begin{proof}
Pauli matrices that do not commute, anticommute. Therefore, we can write explicitly
\begin{equation}
P_A P_B = \bigotimes_{i=1}^N A_i B_i = \bigotimes_{i=1}^N \nonumber
\begin{cases}
    B_i A_i, & \text{if }  [A_i, B_i]=0  \\
    -B_i A_i, &\text{if }  [A_i, B_i] \neq 0.
\end{cases}
\end{equation}
The two factors $A_i$ and $B_i$ commute, if they are both the same Pauli operator, or if one of them is the identity; otherwise, they anticommute.
In order for $P_A P_B$ to equal $P_B P_A$, there must be an even number of cases where $[A_i, B_i ]\neq 0$ because $(-1)^{2n} = 1$. Therefore $P_A$ and $P_B$ commute iff $A_i$ and $B_i$ anticommute on an even number of indices.
\end{proof}

Since the subterms of the UCC doubles operator all commute, the circuit shown in Fig.~\ref{ucc_doubles}, which implements Eq.~(\ref{doublefactor_jw}), is exact. In fact, a general UCC factor of order $n$ will have $2^{2n-1}$ terms after the JW transformation multiplying strings of Pauli $Z$ operators, among which numbers of Pauli $X$ and $Y$ operators are always odd, making numbers of anticommuting indices always even and thus all the strings that contain $X$ and $Y$ commute with one another \cite{romero_babbush_mcclean_hempel_love_aspuru-guzik_2018}.

\subsection{The Conventional Quantum Circuits}

This section will show how one can construct the circuits for each UCC factor. The standard circuit for a single UCC doubles factor was derived in  Ref.~\onlinecite{barkoutsos_2018, romero_babbush_mcclean_hempel_love_aspuru-guzik_2018} and is shown in Fig.~\ref{ucc_doubles}. As shown in section D, the Pauli strings in the exponentials commute. Therefore, a UCC factor can be rewritten as a product of exponentials of Pauli strings. The circuit for the UCC factors follows a similar prescription to \cite{nielsen_chuang_2019}. Nielsen and Chuang provide a strategy for creating circuits of the form $\exp\{-i\frac{\theta}{2} Z_1 Z_2\dots Z_n \}$. By using basis transformations, one can construct a circuit for any generic Pauli string. UCC factors will use the same strategy. To construct the circuit, one can start with the circuit for evaluating $\exp\{ -i\frac{\theta}{2} Z_1Z_1 \dots Z_n \}$ and then apply basis transformations to evaluate the exponential of any Pauli string. 

The circuit to evaluate $\exp\{-i\frac{\theta}{2} Z_1 Z_2\dots Z_n \}$ requires a CNOT cascade, a $U_\theta$ gate applied to the last qubit, and then a reversed CNOT cascade. The CNOT cascade calculates the parity of the circuit. After the first CNOT cascade, the last qubit in the cascade will be $\ket{0}$ if the overall parity was even, and $\ket{1}$ if the parity was odd. The $U_\theta$ gate applied on the last qubit will give a phase of $\exp\{ -i\theta\}$ if the parity is even, and a phase of $\exp\{ +i\theta\}$ if the parity is odd. The following CNOT cascade is applied to cancel out the first CNOT cascade, reverting the qubits to their original value now with a resulting overall application of an exponentiated Pauli string. Figure \ref{circuit_expz} shows an example implementation of $\exp\{-i\frac{\theta}{2} Z_1 Z_2 Z_3 Z_4 \}$. 

In order to evaluate a generic Pauli string consisting of $Z$,$X$, and $Y$, a basis transformation can be applied before the CNOT cascades such that the effective Pauli string is that of only $Z$'s. If the $i$th gate in the Pauli string is an $X$,  a Hadamard gate is sandwiched around the CNOT cascade on the $i$th qubit. This leads to the effective exponential containing a $Z$ since $HXH = Z$. Similarly, if an exponentiated $Y$ gate is applied, a $R_x(-\frac{\pi}{2})$ gate is sandwiched around the CNOT cascade. Figure \ref{ucc_x} shows an example circuit to apply $\exp\{ -i\frac{\theta}{2} Z_1Z_2Z_3X_4\}$. In this example, since the last Pauli in the exponentiated string is an $X$, a Hadamard gate is applied before and after in order to transform the basis and effectively make the circuit an exponential of $Z$s.

\begin{figure}[h]
\includegraphics[width=6cm]{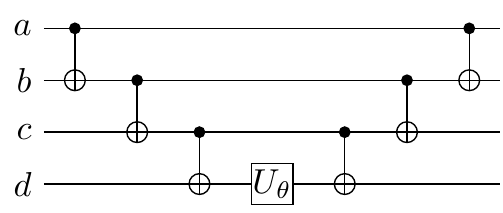}
\centering
\caption{Example of a circuit implementing $\exp\{ -i\frac{\theta}{2} Z_a\otimes Z_b\otimes Z_c \otimes Z_d\}$ for four qubits.}
\label{circuit_expz}
\end{figure}

\begin{figure}[h]
\includegraphics[width=6cm]{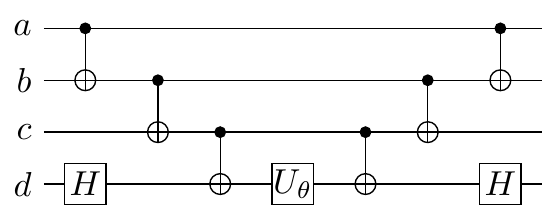}
\centering
\caption{Example of a circuit implementing $\exp\{ -i\frac{\theta}{2} Z_a\otimes Z_b \otimes Z_c \otimes X_d\}$ for four qubits. To apply the $X$ on a different qubit, Hadamard gates can be sandwiched around the respective qubits.}
\label{ucc_x}
\end{figure}

\begin{table}[h]
  \begin{tabular}{c c c c|c || c c c c|c}
         $a$ & $b$ & $c$ & $d$ & Parity & $a$ & $b$ & $c$ & $d$ & Parity\\
         \hline
         0 & 0 & 0 & 0 & 0 & 0 & 0 & 0 & 1 & 1 \\
         0 & 0 & 1 & 0 & 1 & 0 & 0 & 1 & 1 & 0 \\
         0 & 1 & 0 & 0 & 1 & 0 & 1 & 0 & 1 & 0 \\
         0 & 1 & 1 & 0 & 0 & 0 & 1 & 1 & 1 & 1 \\
         1 & 0 & 0 & 0 & 1 & 1 & 0 & 0 & 1 & 0 \\
         1 & 0 & 1 & 0 & 0 & 1 & 0 & 1 & 1 & 1 \\
         1 & 1 & 0 & 0 & 0 & 1 & 1 & 0 & 1 & 1 \\
         1 & 1 & 1 & 0 & 1 & 1 & 1 & 1 & 1 & 0 
    \end{tabular}
    \caption{The parity is the value on qubit $d$ after the CNOT cascade is applied.}
    \label{table:parity calc}
\end{table}

\begin{figure}[h]
\includegraphics[width=7cm]{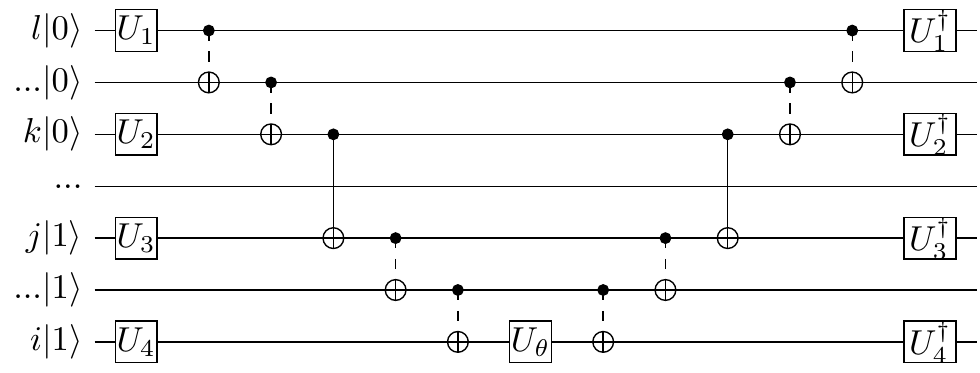}
\centering
\caption{Doubles UCC circuit as discussed in Refs. \onlinecite{barkoutsos_2018} and \onlinecite{romero_babbush_mcclean_hempel_love_aspuru-guzik_2018}. For a general doubles operator, the circuit must be applied eight times, with different combinations of $U$ gates each time. The $U$-gate choices are summarized in Table~\ref{table:circuit_u}. The dashed CNOT gates are part of a CNOT cascade.  }
\label{ucc_doubles}
\end{figure}

\begin{table}[h]
  \begin{tabular}{c|c|c|c|c}
         Subcircuit &$U_1$ & $U_2$ & $U_3$ & $U_4$ \\
         \hline
         1 & $H$ & $H$ & $R_x(-\frac{\pi}{2})$  & $H$\\
         2 & $R_x(-\frac{\pi}{2})$& $H$ & $R_x(-\frac{\pi}{2})$&$R_x(-\frac{\pi}{2})$ \\
         3& $H$ & $R_x(-\frac{\pi}{2})$& $R_x(-\frac{\pi}{2})$&$R_x(-\frac{\pi}{2})$ \\
         4 & $H$ & $H$ & $H$ &$R_x(-\frac{\pi}{2})$ \\
         5 & $R_x(-\frac{\pi}{2})$& $H$ & $H$ & $H$\\
         6 & $H$ &$R_x(-\frac{\pi}{2})$ & $H$ & $H$ \\
         7 & $R_x(-\frac{\pi}{2})$& $R_x(-\frac{\pi}{2})$&$R_x(-\frac{\pi}{2})$ & $H$ \\
         8 & $R_x(-\frac{\pi}{2})$&$R_x(-\frac{\pi}{2})$ & $H$ & $R_x(-\frac{\pi}{2})$
    \end{tabular}
    \caption{Eight different subcircuits that must be run sequentially to apply a UCC doubles factor to a wavefunction. Realizations of the generic unitary operators $U_i$ in terms of Hadamard operators and rotations of $\pi/2$ about the $x$-axis for each subcircuit used in the UCC doubles circuit in Fig.~\ref{ucc_doubles}. The $H$ gate converts the basis to the $x$-basis in order to calculate the exponential of $X$. The $R_x\left( -\frac{\pi}{2}\right)$ gate converts the basis to the $y$-basis in order to calculate the exponential of $Y$. When running an exponential of $Z$, no basis transformation is needed. Since the relevant operators all commute, the subcircuits can be run in any order, but all eight need to appear exactly once to complete the full circuit.}
    \label{table:circuit_u}
\end{table}

In applying the UCC ansatz, circuits such as Fig.~\ref{ucc_doubles} must be re-run multiple times after applying all of the $2^{2n-1}$ different basis transformations \cite{romero_babbush_mcclean_hempel_love_aspuru-guzik_2018}. A general factorized doubles UCC operator can be rewritten as Eq.~(\ref{doublefactor_jw}), and implemented exactly by the circuit shown in Fig.~\ref{ucc_doubles}.

\subsection{Control Gate Identities}
In order to implement some of the more complicated UCC factors needed for the decomposition method, we must break down the general control unitaries into standard gates. To get an accurate gate count of CNOTs, we use the method in Ref.~\cite{elementary_gates}. 

Figure \ref{singly ctrl unitary} shows the breakdown of a singly-controlled unitary. A singly-controlled unitary gate can be broken down into three single qubit gates and two CNOT gates.

\begin{figure}[h]
\includegraphics[width=6cm]{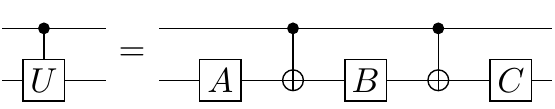}
\centering
\caption{Decomposition of a singly controlled unitary gate \cite{elementary_gates}.}
\label{singly ctrl unitary}
\end{figure}

Figure \ref{doubly ctrl unitary} shows the breakdown of a general doubly-controlled unitary. A doubly-controlled unitary can be broken down into three singly-controlled unitaries and two CNOTs. Thus, in total, a doubly-controlled unitary gate requires 8 CNOT gates and 9 unitaries. 

\begin{figure}[h]
\includegraphics[width=6cm]{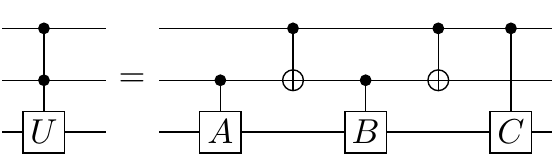}
\centering
\caption{Decomposition of a doubly controlled unitary gate \cite{elementary_gates}.}
\label{doubly ctrl unitary}
\end{figure}

\section{Decomposition Method}



We start by discussing the general schematic for the triple and quadruple excitation. We then show how one could use these schemes to generate higher-rank excitations.

The general principle for this method is as follows. In order to use mainly singles and doubles in the decomposition, we introduce ancilla orbitals. These are non-physical orbitals that act as placeholders. We effectively create higher-rank excitations by exciting these ancilla orbitals, and then applying another excitation to place them back into the correct orbitals. This is done in a way such that states that are not be affected by the higher-rank excitations will remain unaffected after the full procedure is complete. 

\begin{figure*}[t]
    \centering
    \includegraphics[width=\textwidth,height=7cm]{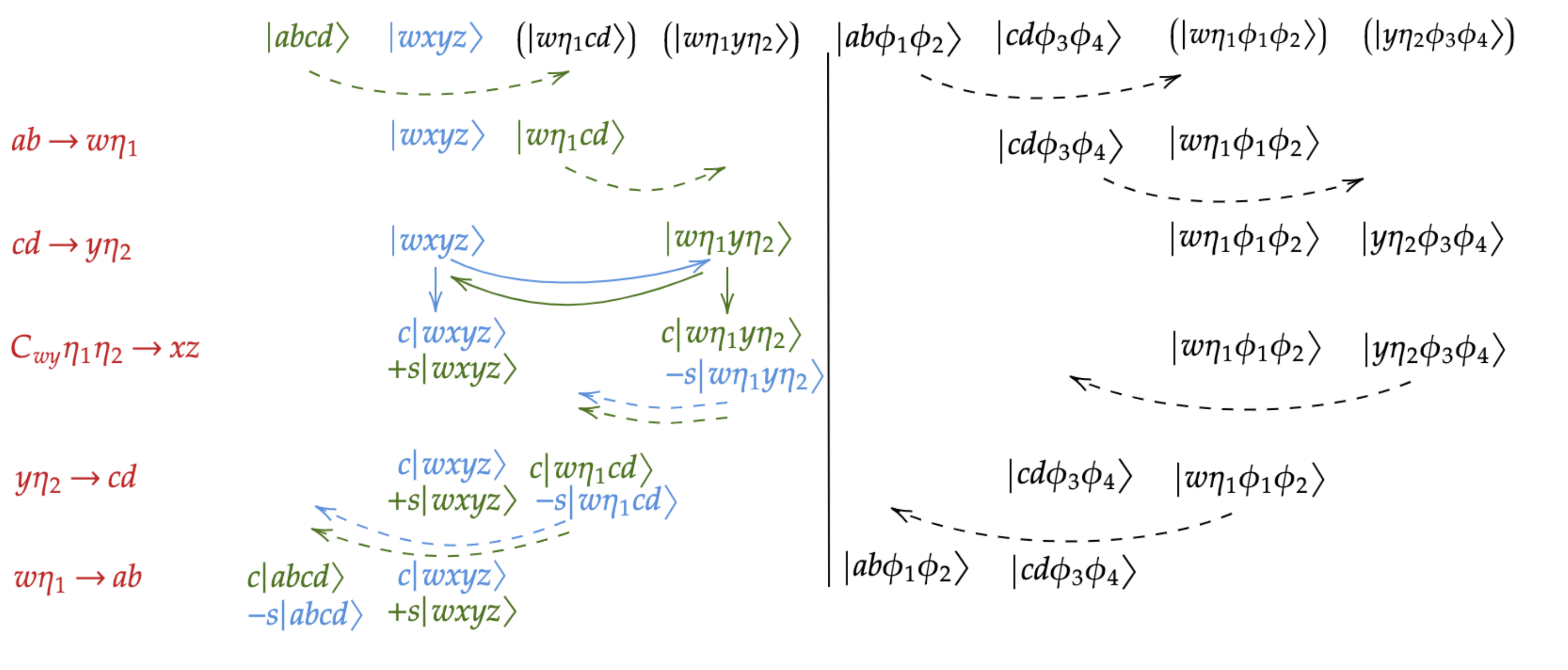}
    \caption{Diagram for the quadruples decomposition scheme. The target states of the UCC quadruples are colored in green and blue. The $4$ `shelved' states are placed on the right of the diagram. `Shelved' states are the ones that are not supposed to be affected by the quadruple operation and the decomposition procedure manages to keep them intact. The states in parentheses are virtual, serving as placeholders to better illustrate some operations. Dotted lines represent full $\pi/2$ rotations whereas solid lines represent rotations with generic angles. The coefficients $c$ and $s$ are $cosine$ and $sine$ functions of said angles from Eq. (\ref{eq:identity}).}
    \label{fig:thebigtable}
\end{figure*}

\subsection{Quadruple Excitations}
We present a schematic to create a quadruple excitation using two ancilla. Our goal is to apply an operator equivalent to $\hat{A}_{abcd}^{wxyz}$

Without loss of generality, we assume that our starting state is a general state of the following form:
\begin{equation}
    \ket{\Psi} =  \xi_1 \ket{abcd} + \xi_2 \ket{ab\phi_1 \phi_2} + \xi_3 \ket{cd \phi_3 \phi_4 } +  \xi_4 \ket{wxyz}
    \label{eq:init}
\end{equation}

Here, $a,b,c,d,w,x,y,z$ are occupied orbitals and $\phi_i$ can be any arbitrary orbital that is not $a,b,c,d,w,x,y,z$. $\xi_i$ is the coefficient associated with each state. We omit states for which the UCC factor acts like the identity and which are not touched by the operators used to construct the quadruple excitation. Note that a general state can have many terms of the form given in Eq.~(\ref{eq:init}) as a linear superposition over different $\phi_i$ with different coefficients. But, because the procedure we use is linear, those other terms will be taken care of in the circuit, so we do not need to include them explicitly in our analysis. 

We illustrate the procedure graphically in Fig.~\ref{fig:thebigtable}. A UCC quadruple operator with angle $\theta$ should transform the wavefunction $\ket{\Psi}$ as follows:
\begin{align}
    &e^{\theta \big(\hat A_{abcd}^{wxyz} - \big(\hat A_{abcd}^{wxyz}\big)^\dagger\big)}\ket{\Psi} \nonumber \\
    = &\cos\theta\xi_1\ket{abcd} + \sin\theta\xi_1\ket{wxyz} + \xi_2\ket{ab\phi_1\phi_2} \nonumber \\
    + &\xi_3\ket{cd\phi_3\phi_4} + \cos\theta\xi_4\ket{wxyz}-\sin\theta \xi_4\ket{abcd}. \label{eq:quad}
\end{align}
The change in sign of the last term arises because it is a de-excitation.

The quadruple excitation requires four doubles and one double-qubit controlled UCC double operation. Table \ref{quaddecomp} shows the operations used to create a quadruple excitation $\hat{A}_{abcd}^{wxyz}$. The leftmost column indicates what our target orbitals are and what they become. 

Starting from our initial state, we first apply a standard doubles UCC operator that transforms $ab \rightarrow w\eta_1$ with $\theta = \pi /2$. This operator will take the occupied $\ket{ab}$ and mix it with $\ket{w \eta_1}$; it does not de-excite any state, because the $\eta_1$ qubit is initially in the 0 state. For example, when applied to the state $\ket{abcd}$:
\begin{align}
    & e^{\frac{\pi}{2} \big(\hat A^{w\eta_1} _{ab} - \big(\hat A^{w\eta_1} _{ab}\big)^\dagger \big) }\xi_1\ket{abcd} \nonumber \\
    = & \cos{\frac{\pi}{2}}\xi_1\ket{abcd} + \sin{\frac{\pi}{2}}\xi_1\ket{w\eta_1 cd} = \xi_1\ket{w\eta_1 cd} 
\end{align}
After this operator is applied, the states $\ket{abcd}$ and $\ket{ab \phi_1 \phi_2}$ will be changed. Hence, after this operation, the initial state in Eq.~(\ref{eq:init}) is transformed into the following: 
\begin{equation}
    \ket{\Psi} \to \xi_1\ket{w\eta_1 cd} + \xi_2\ket{w\eta_1 \phi_1\phi_2} + \xi_3\ket{cd\phi_3\phi_4} + \xi_4\ket{wxyz}.
    \label{eq:middle1}
\end{equation}
See the top line of Fig.~\ref{fig:thebigtable} for a summary of this first step.
Next, another standard doubles UCC operator transforms $cd \rightarrow y\eta_2$ with $\theta = \pi/2$. This will change the state $\ket{w\eta_1 cd}$ to $\ket{w\eta_1y\eta_2}$ and  the state $\ket{cd\phi_3\phi_4}$ to $\ket{y\eta_2\phi_3\phi_4}$; again, there is no de-excitation because the $\eta_2$ qubit is initially in the 0 state. The resulting transformed state is: 
\begin{equation}
    \ket{\Psi}\to \xi_1\ket{w\eta_1 y \eta_2} + \xi_2\ket{w\eta_1 \phi_1\phi_2} + \xi_3\ket{y\eta_2\phi_3\phi_4} + \xi_4\ket{wxyz}. \label{eq:middle2}
\end{equation}
This result is summarized in the second line of Fig.~\ref{fig:thebigtable}.

The operator labelled $C_{wy}\eta_1 \eta_2  \rightarrow xz$ is a doubly controlled UCC double operator. If the orbitals $wy$ are present, then we take $\eta_1 \eta_2$ and apply the UCC operator to take it to a linear superposition of $x z$ and $\eta_1\eta_2$. The operator has a general angle $\theta$ and yields
\begin{align}
    & e^{\theta \big(\hat A^{xz} _{\eta_1 \eta_2} - \big(\hat A^{xz} _{\eta_1 \eta_2}\big)^\dagger \big) }(\xi_1\ket{w\eta_1 y\eta_2}+\xi_4\ket{wxyz}) \nonumber \\
    = & \cos{\theta}\xi_1\ket{w\eta_1 y\eta_2} + \sin{\theta}\xi_1\ket{wxyz} \nonumber \\
    + & \cos{\theta}\xi_4\ket{wxyz} - \sin{\theta}\xi_4\ket{w\eta_1 y\eta_2} 
\end{align}
when acting on the two states that are transformed by it.
The negative sign arises because that term is a de-excitation.  The result after this step is: 
\begin{align}
    \ket{\Psi}\to &\cos\theta\xi_1\ket{w\eta_1 y\eta_2} + \sin\theta\xi_1\ket{wxyz}\nonumber \\
     + &\xi_2\ket{w\eta_1 \phi_1\phi_2} + \xi_3\ket{y\eta_2\phi_3\phi_4} \nonumber\\
    + &\cos\theta\xi_4\ket{wxyz}-\sin\theta \xi_4\ket{w\eta_1 y\eta_2}.\label{eq:middle3}
\end{align}
We have added in trigonometric factors, which multiply whatever the original coefficients were. This operation is depicted in the third line of Fig.~\ref{fig:thebigtable}.

The next two doubles act as corrections. They will remove the ancilla orbitals from the states. The double UCC that takes $y\eta_2\rightarrow cd$ with $\theta = \pi/2$ changes $\ket{y\eta_2 \phi_3 \phi_4}$ to $\ket{cd\phi_3\phi_4}$ and $-\ket{w\eta_1y\eta_2}\rightarrow -\ket{w\eta_1cd}$. The state after this step is:
\begin{align}
    \ket{\Psi}\to &\cos\theta\xi_1\ket{w\eta_1 cd} + \sin\theta\xi_1\ket{wxyz}\nonumber \\
     + &\xi_2\ket{w\eta_1 \phi_1\phi_2} + \xi_3\ket{cd\phi_3\phi_4} \nonumber\\
    + &\cos\theta\xi_4\ket{wxyz}-\sin\theta \xi_4\ket{w\eta_1 cd}, \label{eq:middle4}
\end{align}
see the second to last line of Fig.~\ref{fig:thebigtable}.
Finally, the very last double takes $w\eta_1 \rightarrow ab$ with $\theta = \pi/2$. This takes the state $\ket{w\eta_1 \phi_1\phi_2}$ to $\ket{ab\phi_1\phi_2}$ and $-\ket{w\eta_1 cd} $ to $-\ket{abcd}$. The final state is therefore: 
\begin{align}
    \ket{\Psi}\to &\cos\theta\xi_1\ket{abcd} + \sin\theta\xi_1\ket{wxyz}\nonumber \\
     + &\xi_2\ket{ab\phi_1\phi_2} + \xi_3\ket{cd\phi_3\phi_4} \nonumber\\
    + &\cos\theta\xi_4\ket{wxyz}-\sin\theta \xi_4\ket{abcd}, \label{eq:final}
\end{align}
which is identical to our goal Eq.(\ref{eq:quad}); see the last line of Fig.~\ref{fig:thebigtable} for more detail.

We went through this derivation assuming there was only one term of the form $|ab\phi_1\phi_2\rangle$ in the expansion. But, of course, there can be many such terms. However, since this term gets ``shelved'' to a state that sits out of all of the remaining UCC terms except for the last one, it should be clear that adding additional terms of this form, simply shelves those additional terms (in linear superposition) and then brings them back. So, this approach works for an arbitrary linear combination of terms of the form $|ab\phi_1\phi_2\rangle$. A similar conclusion can be reached for the terms of the form $|cd\phi_3\phi_4\rangle$ (with them being brought back in the second to last step).

\begin{table}[h]
    \resizebox{\columnwidth}{!}{
  \begin{tabular}{|c|l|l|l|l|}
  \hline
  State       & $\ket{abcd}$ & $\ket{ab\phi_1 \phi_2}$ & $\ket{cd\phi_3 \phi_4}$ & $ \ket{wxyz}$
         \\
         \hline
         
         $ab \rightarrow w\eta_1$&$\ket{w\eta_1 cd}$&$\ket{w\eta_1 \phi_1 \phi_2}$&$\ket{cd\phi_3 \phi_4}$&$ \ket{wxyz}$ \\ \hline
         $cd \rightarrow y\eta_2$&$\ket{w\eta_1 y\eta_2}$&$\ket{w\eta_1 \phi_1 \phi_2}$&$\ket{y\eta_2 \phi_3 \phi_4}$&$ \ket{wxyz}$ \\ \hline
         $C_{wy}\eta_1 \eta_2  \rightarrow xz$  &\begin{tabular}{@{}l@{}} $c\ket{w\eta_1y\eta_2}$\\$+s\ket{wxyz}$ \end{tabular}&$\ket{w\eta_1 \phi_1 \phi_2}$&$\ket{y\eta_2 \phi_3 \phi_4}$&\begin{tabular}{@{}l@{}} $c\ket{wxyz}$\\$-s\ket{w\eta_1y\eta_2}$ \end{tabular} \\ \hline
         $y\eta_2 \rightarrow cd$ &\begin{tabular}{@{}l@{}} $c\ket{w\eta_1cd}$\\$+s\ket{wxyz}$ \end{tabular}&$\ket{w\eta_1 \phi_1 \phi_2}$&$\ket{cd \phi_3 \phi_4}$&\begin{tabular}{@{}l@{}} $c\ket{wxyz}$\\$-s\ket{w\eta_1 cd}$ \end{tabular} \\ \hline
         $w\eta_1 \rightarrow ab$ &\begin{tabular}{@{}l@{}} $c\ket{abcd}$\\$+s\ket{wxyz}$ \end{tabular}&$\ket{ab \phi_1 \phi_2}$&$\ket{cd \phi_3 \phi_4}$& \begin{tabular}{@{}l@{}} $c\ket{wxyz}$\\$-s\ket{abcd}$ \end{tabular} \\ \hline
    \end{tabular}
    }
    \caption{Schematic for the quadruple decomposition algorithm. The $c$ and $s$ in the second and last columns represent $\cos\theta$ and $\sin\theta$ respectively. The angles with which the four UCC doubles apply to the wavefunctions are all $\pi/2$, whereas the angle of the doubly controlled UCC double operator is a generic one $\theta$.}
    \label{quaddecomp}
\end{table}

Next we will show that each and every step of the algorithm is necessary to successfully decompose a UCC quadruple operator. 

One might assume that it is possible to break down the quad with two doubles. For example, naively applying a double that takes $ab\rightarrow wx$ and $cd \rightarrow yz$ would take $\ket{abcd} \rightarrow \ket{wxyz}$. This approach will fail even if only $\ket{abcd}$ or $\ket{wxyz}$ are present in the wavefunction. Suppose we have an initial wavefunction $\ket{\Psi}=\xi_1\ket{abcd}+\xi_2\ket{wxyz}$, where $\xi_1 ^2 + \xi_2 ^2 = 1$ and $\xi_1, \xi_2\in \mathbb{R}$. The first step $ab\rightarrow wx$ acting on the wavefunction $\ket{\Psi}$ yields
\begin{align}
    \ket{\Psi}\to &\cos\theta\xi_1\ket{abcd}+\sin\theta\xi_1\ket{wxcd} \nonumber \\
    &\cos\theta\xi_2\ket{wxyz}-\sin\theta\xi_2\ket{abyz}.
\end{align}
The second step $cd \rightarrow yz$ yields
\begin{align}
    \ket{\Psi}\to &\cos^2\theta\xi_1\ket{abcd} +\sin\theta\cos\theta\xi_1\ket{abyz}\nonumber\\
    +&\cos\theta\sin\theta\xi_1\ket{wxcd}+\sin^2\theta\xi_1\ket{wxyz} \nonumber \\
    +&\cos^2\theta\xi_2\ket{wxyz} - \cos\theta\sin\theta\xi_2\ket{wxcd}\nonumber\\
    -&\cos\theta\sin\theta\xi_2\ket{abyz} + \sin^2\theta\xi_2\ket{abcd}.
\end{align}
Recall the goal here is to replicate the operation
\begin{align}
    \ket{\Psi} \to &\cos\theta\xi_1\ket{abcd}+\sin\theta\xi_1\ket{wxyz} \nonumber \\
    +&\cos\theta\xi_2\ket{wxyz}-\sin\theta\xi_2\ket{abcd},
\end{align}
which the naive method fails miserably. 

We introduce the ancilla orbitals to circumvent such an issue. First let us examine the scheme as shown in Table~\ref{tab:nocontrol}.
\begin{table}[h]
    \centering
    \begin{tabular}{|c|c|}
    \hline
        Step & Operation \\ \hline
        1 & $ab\rightarrow w\eta_1$ \\ \hline
        2 & $cd\rightarrow y\eta_2$ \\ \hline
        3 & $\eta_1\eta_2 \rightarrow xz$ \\ \hline
        4 & $y\eta_2\rightarrow cd$ \\ \hline
        5 & $w\eta_1\rightarrow ab$ \\ \hline
    \end{tabular}
    \caption{A seemingly working scheme that trys to decompose the UCC quadruple operator with the aid of two ancilla qubits $\eta_1$ and $\eta_2$. Steps 1, 2, 4, and 5 are associated with angle $\theta=\pi/2$. The angle used in step 3 is arbitrary.} 
    \label{tab:nocontrol}
\end{table}

Although sometimes successful at delivering the correct resulting wavefunctions, this method breaks down if states $\ket{xz\phi_1\phi_2}$ are present where $\phi_i$ are arbitrary orbitals. For example, assume we have a wavefunction $\ket{\Psi_{tr}} = \xi_1\ket{abcd} + \xi_2\ket{acxz}+\xi_3\ket{abyx}+\xi_4\ket{cdwz}+\xi_5\ket{wxyz}$, the intermediate states obtained from using the scheme presented in Tab.~\ref{tab:nocontrol} are shown in Tab.~\ref{tab:nocontroldecomp}.

\begin{table}[h!]
    \resizebox{\columnwidth}{!}{
  \begin{tabular}{|c|l|l|l|l|l|}
  \hline
  State       & $\ket{abcd}$ & $\ket{acxz}$ & $\ket{abyx}$ & $\ket{cdwz}$ & $\ket{wxyz}$
         \\
         \hline
         
         $ab \rightarrow w\eta_1$&$\ket{w\eta_1 cd}$&$\ket{acxz}$&$\ket{w\eta_1 yx}$&$\ket{cdwy}$&$\ket{wxyz}$ \\ \hline
         $cd \rightarrow y\eta_2$&$\ket{w\eta_1 y\eta_2}$&$\ket{acxz}$&$\ket{w\eta_1 yx}$&$\ket{y\eta_2 wz}$&$\ket{wxyz}$ \\ \hline
         $\eta_1 \eta_2  \rightarrow xz$  &\begin{tabular}{@{}l@{}} $c\ket{w\eta_1y\eta_2}$\\$+s\ket{wxyz}$ \end{tabular}&\begin{tabular}{@{}l@{}} $c\ket{acxz}$\\$-s\ket{ac\eta_1 \eta_2}$ \end{tabular}&$\ket{w\eta_1 yx}$&$\ket{y\eta_2 wz}$&\begin{tabular}{@{}l@{}} $c\ket{wxyz}$\\$-s\ket{w\eta_1y\eta_2}$ \end{tabular} \\ \hline
         $y\eta_2 \rightarrow cd$ &\begin{tabular}{@{}l@{}} $c\ket{w\eta_1cd}$\\$+s\ket{wxyz}$ \end{tabular}&\begin{tabular}{@{}l@{}} $c\ket{acxz}$\\$-s\ket{ac\eta_1 \eta_2}$ \end{tabular}&$\ket{w\eta_1 yx}$&$\ket{cdwz}$&\begin{tabular}{@{}l@{}} $c\ket{wxyz}$\\$-s\ket{w\eta_1 cd}$ \end{tabular}\\ \hline
         $w\eta_1 \rightarrow ab$ &\begin{tabular}{@{}l@{}} $c\ket{abcd}$\\$+s\ket{wxyz}$ \end{tabular}&\begin{tabular}{@{}l@{}} $c\ket{acxz}$\\$-s\ket{ac\eta_1 \eta_2}$ \end{tabular}&$\ket{abyx}$&$\ket{cdwz}$&\begin{tabular}{@{}l@{}} $c\ket{wxyz}$\\$-s\ket{abcd}$ \end{tabular} \\ \hline
    \end{tabular}
    }
    \caption{States of the wavefunction $\ket{\Psi_{tr}}$ transformed by operators from Tab.~\ref{tab:nocontrol}. It is noticable here that the state $\ket{acxz}$ will be affected by the critical step $\eta_1 \eta_2 \rightarrow xz$ due to the fact that the UCC excitation operator is also a UCC de-excitation operator and the resulting state will not be corrected back into $\ket{acxz}$ either. Therefore the scheme shown in Tab.~\ref{tab:nocontrol} fails when $\ket{xz\phi_1 \phi_2}$ is present. }
    \label{tab:nocontroldecomp}
\end{table}
Hence the usage of a doubly controlled UCC double operation with the two control qubits being placed onto the orbitals $w$ and $y$ to make sure that only the state $\ket{wxyz}$ will be affected by the double $\eta_1 \eta_2 \rightarrow xz$.

\subsection{Other Rank Excitations}

Like the quadruple excitation, the triple excitation involves five operations. We follow a similar architecture for the triples as we do for the quadruple excitations. It involves two doubles, two singles, and one singly controlled double.

Table \ref{tripledecomp} summarizes the operations needed to apply the triple $\hat{A}_{abc}^{wxy}$. Note that a traditional way of implementing a UCC triple operator uses less two-qubit gates than this method for $N \leq 18$, however the gate count for CNOTs present in the traditional circuit will quickly outnumber that in our circuit. Another direction to approach the triples is to use Givens rotations together with control gates and swap gates \cite{xanadu_2021}. However for large systems consisting of a large number of active orbitals, multi-qubit controlled swaps and multi-qubit Givens operators will quickly become inefficient.

\begin{table}[h]
    \resizebox{\columnwidth}{!}{
  \begin{tabular}{|c|l|l|l|l|}
  \hline
  State       & $\ket{abc}$ & $\ket{ab\phi_1}$ & $\ket{cd\phi_2 \phi_3}$ & $ \ket{wxy}$
         \\
         \hline
         
         $ab \rightarrow w\eta_1$&$\ket{w\eta_1 c}$&$\ket{w\eta_1 \phi_1}$&$\ket{c\phi_2 \phi_3}$&$ \ket{wxy}$ \\ \hline
         $c \rightarrow \eta_2$&$\ket{w\eta_1 \eta_2}$&$\ket{w\eta_1 \phi_1}$&$\ket{\eta_2 \phi_2 \phi_3}$&$ \ket{wxy}$ \\ \hline
         $C_{w}\eta_1 \eta_2  \rightarrow xy$  &\begin{tabular}{@{}l@{}} $c\ket{w\eta_1\eta_2}$\\$+s\ket{wxy}$ \end{tabular}&$\ket{w\eta_1 \phi_1}$&$\ket{\eta_2 \phi_3}$&\begin{tabular}{@{}l@{}} $c\ket{wxy}$\\$-s\ket{w\eta_1\eta_2}$ \end{tabular} \\ \hline
         $\eta_2 \rightarrow c$ &\begin{tabular}{@{}l@{}} $c\ket{w\eta_1c}$\\$+s\ket{wxy}$ \end{tabular}&$\ket{w\eta_1 \phi_1}$&$\ket{c \phi_2 \phi_3}$&\begin{tabular}{@{}l@{}} $c\ket{wxy}$\\$-s\ket{w\eta_1 c}$ \end{tabular} \\ \hline
         $w\eta_1 \rightarrow ab$ &\begin{tabular}{@{}l@{}} $c\ket{abc}$\\$+s\ket{wxy}$ \end{tabular}&$\ket{ab \phi_1}$&$\ket{c \phi_2 \phi_3}$& \begin{tabular}{@{}l@{}} $c\ket{wxy}$\\$-s\ket{abc}$ \end{tabular} \\ \hline
    \end{tabular}
    }
    \caption{Schematic for the triple decomposition algorithm. The $c$ and $s$ in the second and last columns represent $\cos\theta$ and $\sin\theta$ respectively. The angles with which the four UCC doubles apply to the wavefunctions are all $\pi/2$, whereas the angle of the doubly controlled UCC double operator is  $\theta$.}
    \label{tripledecomp}
\end{table}

For higher-rank excitations, we present various methods for decomposing $N$-rank excitations in terms of lower-rank excitations. While multiple methods to break down higher-rank excitations are possible, every method will follow the same methodology. We start with two excitations into ancilla orbitals, followed by a controlled operation and then two more excitations to undo the rotation into the ancilla orbitals. In total, the process takes five operations. For an $n$-tuple excitation operator, the outer excitations should add up to $n$. For example, for a sextuple excitation, one should use a double and quad, two triples, or a single and a quintuple excitation. 

The method of choice should depend on the hardware in use, as different methods utilize different numbers of CNOTs and rotations. For example, consider the case of the the sextuplet excitation. We can either perform this with 2 doubles, 2 quads, and one quadruply controlled double, or we can use 4 triples and one quadruply controlled double. The choice  to pick is based on hardware limitations, as the gate count for different types of gates varies for these two schemes. Tables \ref{24sextdecomp} and \ref{33sextdecomp} show these two schematics. 

\begin{table}[h]
    \resizebox{\columnwidth}{!}{
  \begin{tabular}{|c|l|l|l|l|}
  \hline
  State       & $\ket{abcde}$ & $\ket{ab\phi_1 \phi_2 \phi_3}$ & $\ket{cde\phi_4 \phi_5}$ & $ \ket{vwxyz}$
         \\
         \hline
         
         $ab \rightarrow v\eta_1$&$\ket{v\eta_1 cde}$&$\ket{v\eta_1 \phi_1 \phi_2 \phi_3}$&$\ket{cde\phi_4 \phi_5}$&$ \ket{vwxyz}$ \\ \hline
         $cde \rightarrow xy\eta_2$&$\ket{v\eta_1 xy\eta_2}$&$\ket{v\eta_1 \phi_1 \phi_2 \phi_3}$&$\ket{xy\eta_2 \phi_4 \phi_5}$&$ \ket{vwxyz}$ \\ \hline
         $C_{vxy}\eta_1 \eta_2  \rightarrow wz$  &\begin{tabular}{@{}l@{}} $c\ket{v\eta_1 xy\eta_2}$\\$+s\ket{vwxyz}$ \end{tabular}&$\ket{v\eta_1 \phi_1 \phi_2 \phi_3}$&$\ket{xy\eta_2 \phi_4 \phi_5}$&\begin{tabular}{@{}l@{}} $c\ket{vwxyz}$\\$-s\ket{v\eta_1 xy\eta_2}$ \end{tabular} \\ \hline
         $xy\eta_2 \rightarrow cde$ &\begin{tabular}{@{}l@{}} $c\ket{v\eta_1cde}$\\$+s\ket{vwxyz}$ \end{tabular}&$\ket{v\eta_1 \phi_1 \phi_2\phi_3}$&$\ket{cde \phi_4 \phi_5}$&\begin{tabular}{@{}l@{}} $c\ket{vwxyz}$\\$-s\ket{v\eta_1 cde}$ \end{tabular} \\ \hline
         $v\eta_1 \rightarrow ab$ &\begin{tabular}{@{}l@{}} $c\ket{abcde}$\\$+s\ket{vwxyz}$ \end{tabular}&$\ket{ab \phi_1 \phi_2 \phi_3}$&$\ket{cde \phi_4 \phi_5}$& \begin{tabular}{@{}l@{}} $c\ket{vwxyz}$\\$-s\ket{abcde}$ \end{tabular} \\ \hline
    \end{tabular}
    }
    \caption{Schematic for the quintuple decomposition algorithm. The $c$ and $s$ in the second and last columns represent $\cos\theta$ and $\sin\theta$ respectively. The angles with which the four UCC doubles apply to the wavefunctions are all $\pi/2$, whereas the angle of the doubly controlled UCC double operator is  $\theta$.}
    \label{quintdecomp}
\end{table}

\begin{table}[h]
    \resizebox{\columnwidth}{!}{
  \begin{tabular}{|c|l|l|l|l|}
  \hline
  State       & $\ket{abcdef}$ & $\ket{ab\phi_1 \phi_2 \phi_3\phi_4}$ & $\ket{cdef\phi_5 \phi_6}$ & $ \ket{uvwxyz}$
         \\
         \hline
         
         $ab \rightarrow u\eta_1$&$\ket{u\eta_1 cdef}$&$\ket{u\eta_1 \phi_1 \phi_2 \phi_3 \phi_4}$&$\ket{cdef\phi_5 \phi_6}$&$ \ket{uvwxyz}$ \\ \hline
         $cdef \rightarrow wxy\eta_2$&$\ket{u\eta_1 wxy\eta_2}$&$\ket{u\eta_1 \phi_1 \phi_2 \phi_3 \phi_4}$&$\ket{wxy\eta_2 \phi_5 \phi_6}$&$ \ket{uvwxyz}$ \\ \hline
         $C_{uwxy}\eta_1 \eta_2  \rightarrow vz$  &\begin{tabular}{@{}l@{}} $c\ket{u\eta_1 wxy\eta_2}$\\$+s\ket{uvwxyz}$ \end{tabular}&$\ket{u\eta_1 \phi_1 \phi_2 \phi_3\phi_4}$&$\ket{wxy\eta_2 \phi_5 \phi_6}$&\begin{tabular}{@{}l@{}} $c\ket{uvwxyz}$\\$-s\ket{u\eta_1 wxy\eta_2}$ \end{tabular} \\ \hline
         $wxy\eta_2 \rightarrow cdef$ &\begin{tabular}{@{}l@{}} $c\ket{u\eta_1cdef}$\\$+s\ket{uvwxyz}$ \end{tabular}&$\ket{u\eta_1 \phi_1 \phi_2\phi_3\phi_4}$&$\ket{cdef \phi_5 \phi_6}$&\begin{tabular}{@{}l@{}} $c\ket{uvwxyz}$\\$-s\ket{u\eta_1 cdef}$ \end{tabular} \\ \hline
         $u\eta_1 \rightarrow ab$ &\begin{tabular}{@{}l@{}} $c\ket{abcdef}$\\$+s\ket{uvwxyz}$ \end{tabular}&$\ket{ab \phi_1 \phi_2 \phi_3\phi_4}$&$\ket{cdef \phi_5 \phi_6}$& \begin{tabular}{@{}l@{}} $c\ket{uvwxyz}$\\$-s\ket{abcdef}$ \end{tabular} \\ \hline
    \end{tabular}
    }
    \caption{Schematic for the 2-4 sextuple decomposition algorithm. The $c$ and $s$ in the second and last columns represent $\cos\theta$ and $\sin\theta$ respectively. The angles with which the four UCC doubles apply to the wavefunctions are all $\pi/2$, whereas the angle of the doubly controlled UCC double operator is  $\theta$.}
    \label{24sextdecomp}
\end{table}

\begin{table}[h]
    \resizebox{\columnwidth}{!}{
  \begin{tabular}{|c|l|l|l|l|}
  \hline
  State       & $\ket{abcdef}$ & $\ket{abc\phi_1 \phi_2 \phi_3}$ & $\ket{def\phi_4\phi_5 \phi_6}$ & $ \ket{uvwxyz}$
         \\
         \hline
         
         $abc \rightarrow uv\eta_1$&$\ket{uv\eta_1 def}$&$\ket{uv\eta_1 \phi_1 \phi_2 \phi_3}$&$\ket{def\phi_4\phi_5 \phi_6}$&$ \ket{uvwxyz}$ \\ \hline
         $def \rightarrow xy\eta_2$&$\ket{uv\eta_1 xy\eta_2}$&$\ket{uv\eta_1 \phi_1 \phi_2 \phi_3}$&$\ket{xy\eta_2 \phi_4 \phi_5 \phi_6}$&$ \ket{uvwxyz}$ \\ \hline
         $C_{uvxy}\eta_1 \eta_2  \rightarrow wz$  &\begin{tabular}{@{}l@{}} $c\ket{uv\eta_1 xy\eta_2}$\\$+s\ket{uvwxyz}$ \end{tabular}&$\ket{uv\eta_1 \phi_1 \phi_2 \phi_3}$&$\ket{xy\eta_2 \phi_4 \phi_5 \phi_6}$&\begin{tabular}{@{}l@{}} $c\ket{uvwxyz}$\\$-s\ket{uv\eta_1 xy\eta_2}$ \end{tabular} \\ \hline
         $xy\eta_2 \rightarrow def$ &\begin{tabular}{@{}l@{}} $c\ket{uv\eta_1 def}$\\$+s\ket{uvwxyz}$ \end{tabular}&$\ket{uv\eta_1 \phi_1 \phi_2 \phi_3}$&$\ket{def \phi_4 \phi_5 \phi_6}$&\begin{tabular}{@{}l@{}} $c\ket{uvwxyz}$\\$-s\ket{uv\eta_1 def}$ \end{tabular} \\ \hline
         $uv\eta_1 \rightarrow abc$ &\begin{tabular}{@{}l@{}} $c\ket{abcdef}$\\$+s\ket{uvwxyz}$ \end{tabular}&$\ket{abc \phi_1 \phi_2 \phi_3}$&$\ket{def \phi_4 \phi_5 \phi_6}$& \begin{tabular}{@{}l@{}} $c\ket{uvwxyz}$\\$-s\ket{abcdef}$ \end{tabular} \\ \hline
    \end{tabular}
    }
    \caption{Schematic for the 3-3 sextuple decomposition algorithm. The $c$ and $s$ in the second and last columns represent $\cos\theta$ and $\sin\theta$ respectively. The angles with which the four UCC doubles apply to the wavefunctions are all $\pi/2$, whereas the angle of the doubly controlled UCC double operator is $\theta$.}
    \label{33sextdecomp}
\end{table}

\subsection{Code for Controlled UCC Factors}
The decomposition method relies on controlled UCC factors. For example, in the quadruple excitation, we require a doubly controlled UCC factor that applies $\eta_1 \eta_2 \rightarrow xz$ with $wy$ as the control qubits. Figure \ref{doubly controlled double} shows the circuit for a doubly controlled UCC factor; note that the circuit decomposition requires an additional two ancilla qubits denoted $\alpha_1$ and $\alpha_2$. The CNOTs applied before the unitary gates are used to encode the information of the control qubits into two ancilla qubits. This way, even after the cascade is applied to account for the parity, the qubits $\ket{\alpha_1}$ and $\ket{\alpha_2}$ contain the information from the control qubits. These qubits are not involved in the CNOT cascade, but are used as control qubits for the doubly controlled rotation gate that is applied within the UCC factor. 

\begin{figure}[h]
\includegraphics[width=6cm]{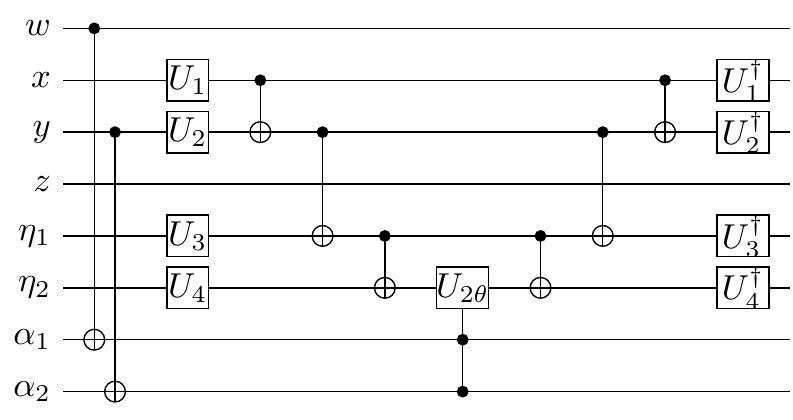}
\centering
\caption{Example of a circuit used to implement a doubly controlled double. This circuit corresponds to $C_{wy}\eta_1 \eta_2  \rightarrow xz$. The qubits $\ket{\alpha_1}$ and $\ket{\alpha_2}$ are ancilla qubits used to keep the information of $w$ and $y$ since the CNOT cascade encoding parity will alter the values.}
\label{doubly controlled double}
\end{figure}

The doubly controlled rotation gate can be broken down into unitary gates and CNOTs \cite{elementary_gates}. A doubly controlled unitary gate can be broken down into three singly controlled unitary gates and two CNOT gates. The singly controlled unitary gates can be broken down into two CNOTs and two unitary gates. Thus, in total, the doubly controlled unitary gates consist of 8 CNOTs and 6 unitaries.

\section{Results}

The benefit of the decomposition method is that the total number of CNOT gates is significantly lower than what is used in a traditional $N$-rank UCC factor. A traditional $N$-rank UCC factor with $M$ orbitals requires at most $2^{2N-1}$ single qubit rotations, $2^{2N}(M-1)$ CNOTs, and $4N(2^{2N-1})$ single qubit non-rotation gates \cite{barkoutsos_2018, romero_babbush_mcclean_hempel_love_aspuru-guzik_2018}. This count comes from assuming that there are no simplifications in the Jordan-Wigner strings.  Each UCC factor consists of a circuit that must be run $2^{2N-1}$ times for an $N$-tuple excitation. For each run, a CNOT must be applied between every neighboring set of orbitals twice, resulting in a total of $2(M-1)$ CNOTs per run. In reality, the number of CNOTs may be reduced due to simplifications in the Jordan-Wigner strings. Although this estimate for CNOT gates is generally an overestimate, the decomposition method presented above is significantly lower in gate count. 

The number of CNOT counts can also be lower if one uses a different encoding than the Jordan-Wigner encoding \cite{aspuru-guzik_uccreview}. We do not examine this strategy in detail here, primarily because such a decoding can be used for the different operators in the decomposition as well, and we anticipate similar gains in efficiency.

For comparison, consider the requirements for a quadruple excitation. A traditional quad requires $128$ single qubit rotations, $256(M-1)$ CNOTs, and $2048$ single qubit Clifford gates. The circuit used consists of $2(M-1)$ CNOTs from the cascade, one single qubit rotation applied within the cascade, and 16 single qubit gates for the basis transformations and inverse transformations. This circuit must be run 128 times. 

Our decomposition instead requires two (plus two) ancilla orbitals and is built from four doubles and one doubly controlled double. The number of required qubits will increase from $M$ to $M+4$. Two qubits are used as ancilla, and two additional qubits are needed for the controlled gate implementation. Since the decomposed quad is constructed from four doubles excitations and 1 controlled UCC doubles, the resulting CNOT count in the worst case is $4\cdot 2^4((M+2)-1) + [2^4((M+2)-1) + 2^4\cdot 8] = 80M + 208$. The first term is the CNOT count for the four standard doubles used, and the term in the brackets is the count for the controlled UCC doubles. This count increases with the number of orbitals because in each of the doubles, adding an extra orbital will add two more CNOTs into the CNOT cascade that calculates parity.
The $2^4\cdot 8$ CNOT gates comes from breaking down the controlled rotation gate \cite{elementary_gates}. Note that, although we have a total of $M+4$ total qubits, two of the qubits are not involved in the CNOT cascade.  Compared to that of a standard quadruple, the order is much less in the worst case count. 

Similarly, for other higher-rank UCC excitations, the CNOT count of the decomposition method is much lower. Figures 5-8 show the worst case gate counts for the decomposition method against the method proposed in \cite{barkoutsos_2018, romero_babbush_mcclean_hempel_love_aspuru-guzik_2018}. 

In the NISQ era, optimizing the ansatz for current hardware is necessary. In the near term, circuits that reduce circuit depth and number of CNOTs in exchange for a few additional qubits can be highly beneficial. 

\begin{figure}[htb]
    \centering
    \includegraphics[width=0.45\textwidth]{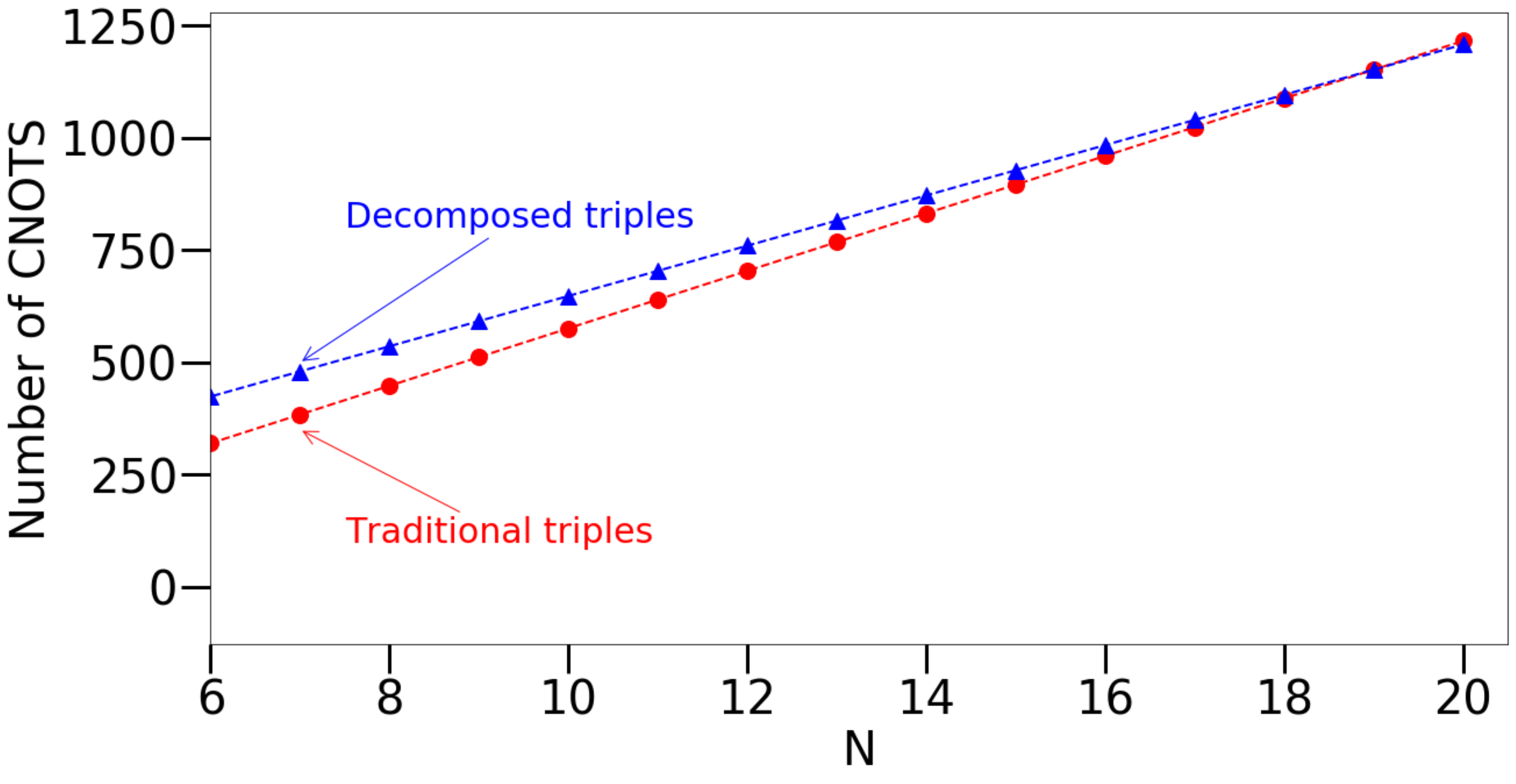}
    \caption{CNOT gate counts of traditional triples and decomposed triples.}
\end{figure}

\begin{figure}[htb]
    \centering
    \includegraphics[width=0.45\textwidth]{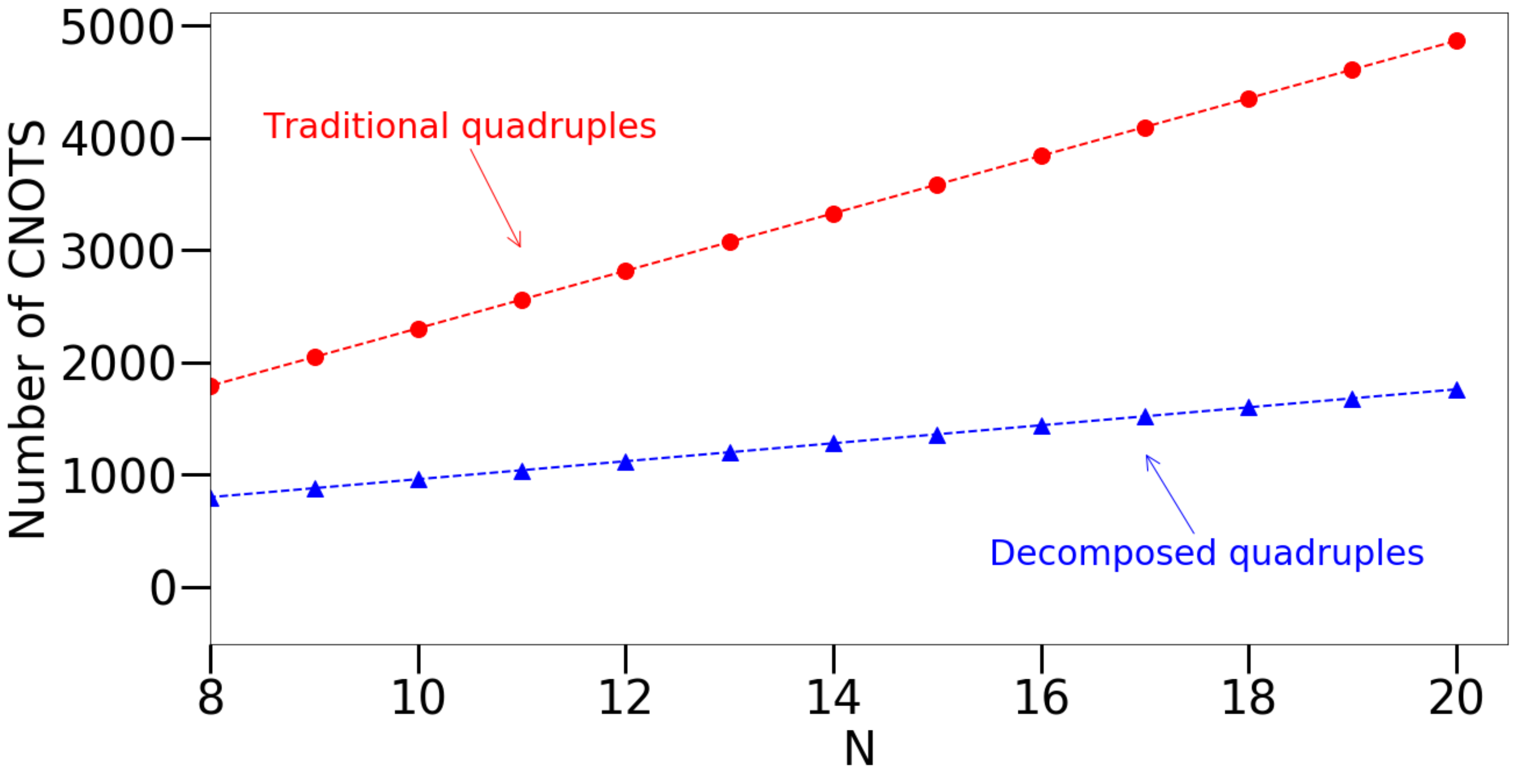}
    \caption{CNOT gate counts of traditional quadruples and decomposed quadruples.}
\end{figure}

\begin{figure}[htb]
    \centering
    \includegraphics[width=0.45\textwidth]{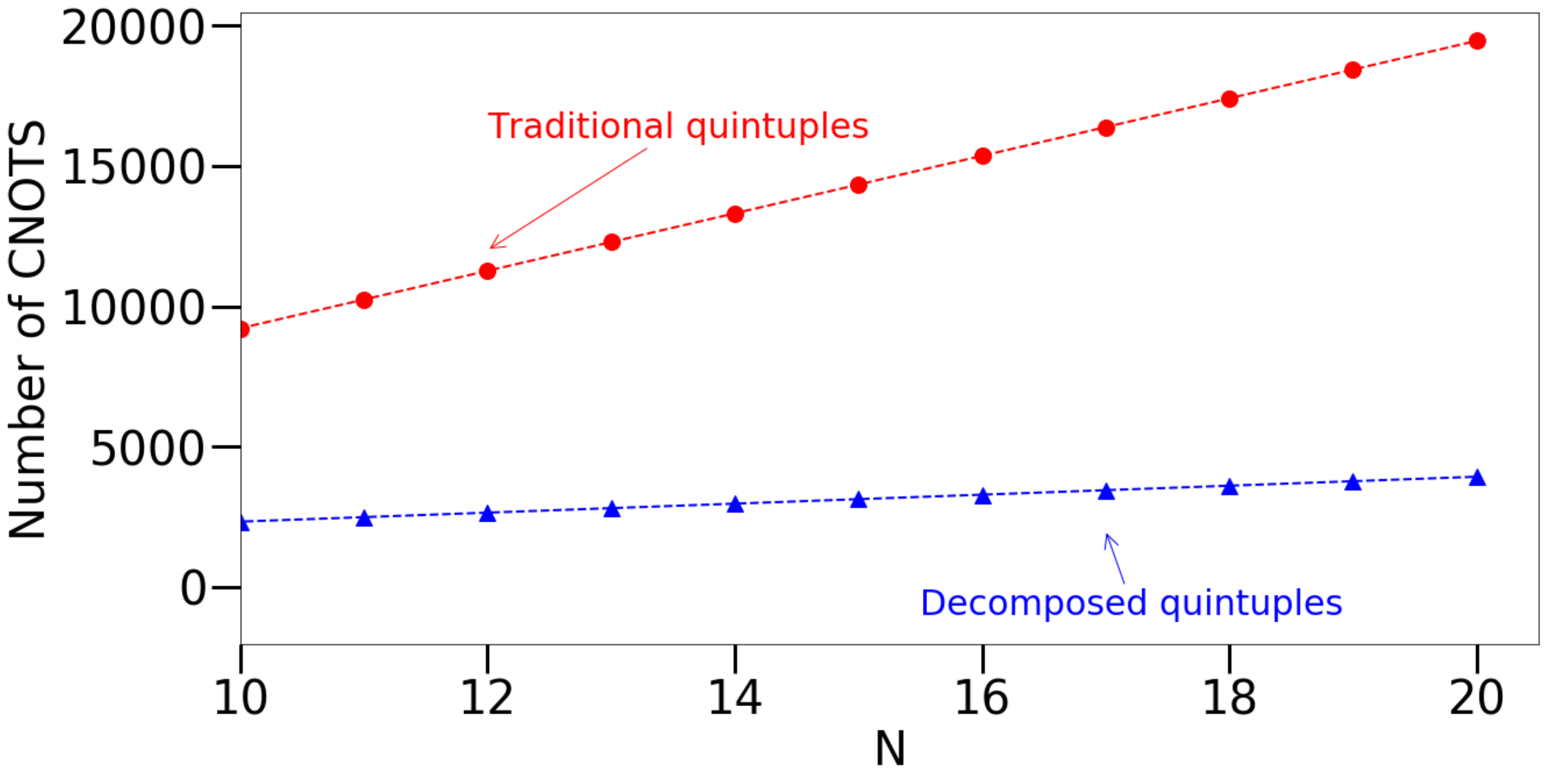}
    \caption{CNOT gate counts of traditional quintuples and decomposed quintuples.}
\end{figure}

\begin{figure}[htb]
    \centering
    \begin{subfigure}{0.45\textwidth}
        \centering
        \includegraphics[width=\textwidth]{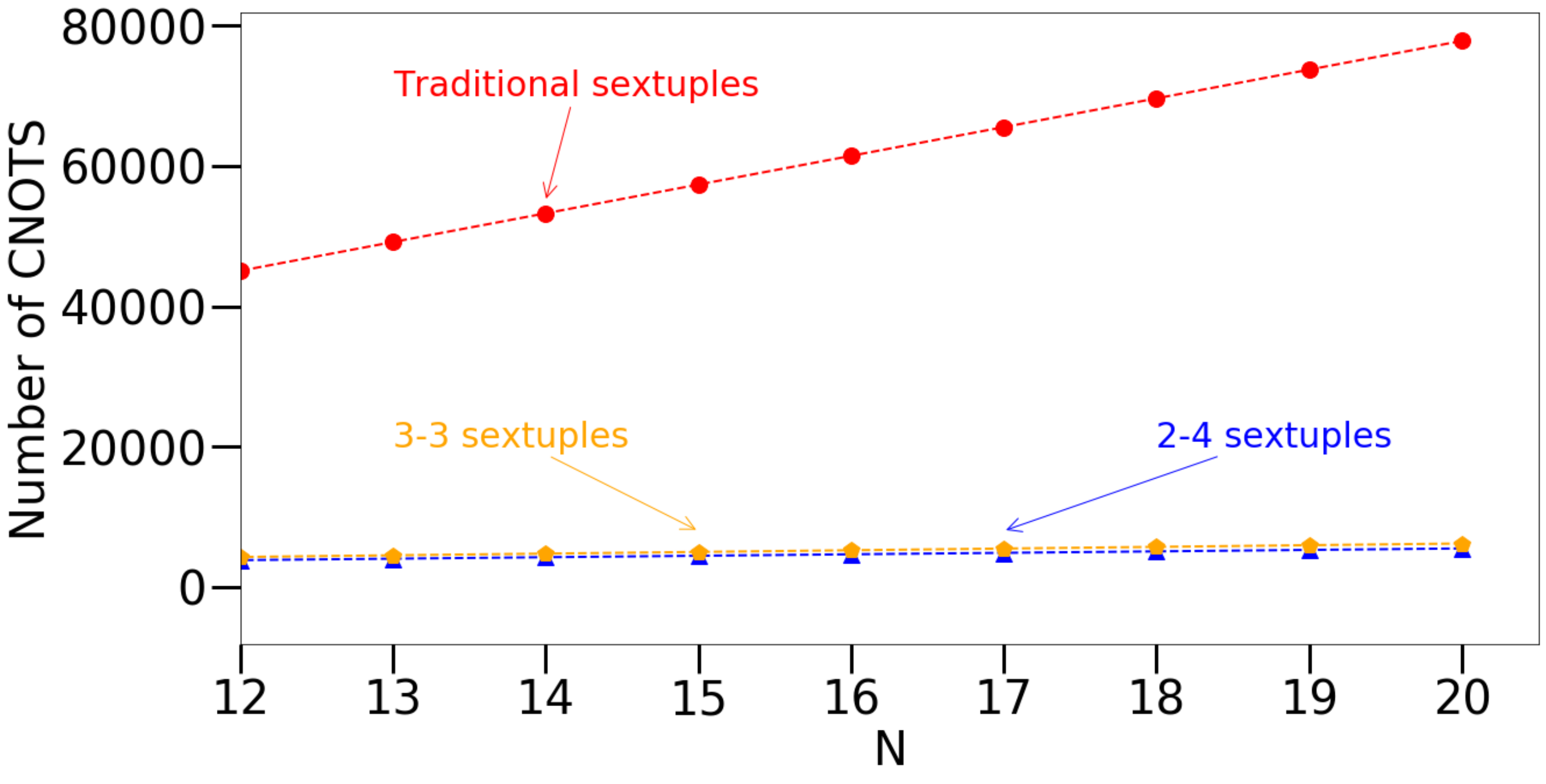}
        \caption{}
        \label{24sextuples}
        \end{subfigure}
    \hfill
    \begin{subfigure}{0.45\textwidth}
        \centering
        \includegraphics[width=\textwidth]{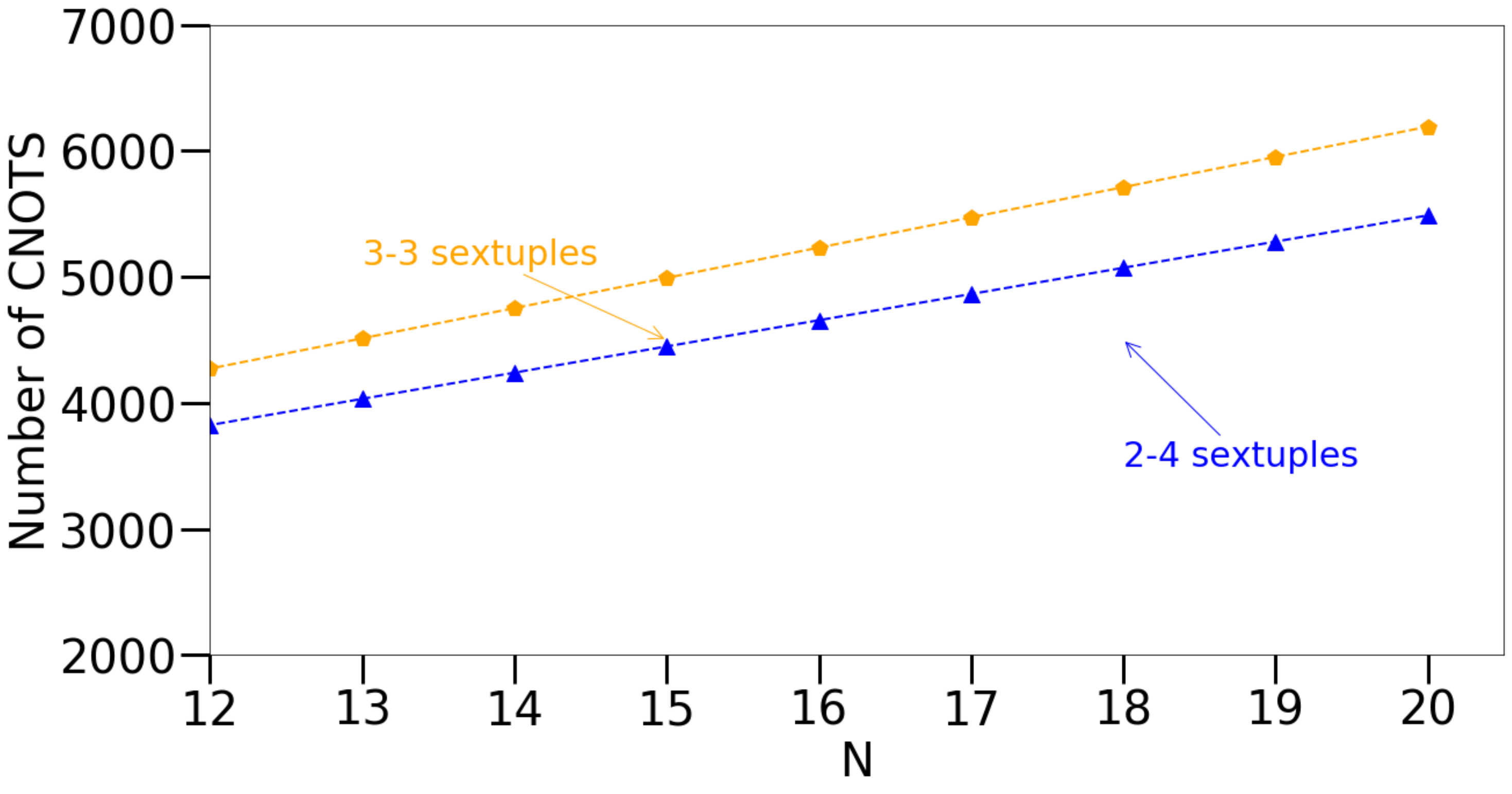}
        \caption{}
        \label{33sextuples}
        \end{subfigure}
    \caption{(\subref{24sextuples}) CNOT gate counts of traditional sextuples, the 3-3 sextuples, and the 2-4 sextuples. (\subref{33sextuples}) CNOT gate counts of the 3-3 sextuples and the 2-4 sextuples.}
\end{figure}

\section{Conclusion}
We have presented specific schemes to decompose high-rank UCC operators into low-rank singles and/or doubles, significantly reducing the number of CNOT gates needed to implement such circuits at the expense of using extra ancilla qubits. We have shown the proposed method is the most resource-friendly when the state preparation involves entangling a large number of qubits for a large system using high-rank UCC operators, such as quintuples and sextuples (or higher). It is anticipated such terms will be needed for strongly correlated molecules that are planned to be examined on quantum computers. 

For NISQ hardware, large numbers of two-qubit entangling gates are problematic. Generally, one wants to avoid having a large circuit depth due to noise, decoherence, and low fidelity. However, increasing the number of qubits in exchange for a circuit with less depth is favorable in the near term. For the construction of the specific scheme presented in this paper, we used the factorized form of the UCC ansatz, which was able to create the exact ground state wavevector using the method mentioned in \cite{xu_lee_freericks_2020}. Being able to decompose the UCC quadruples operator used in the state preparation for the ground state wavefunction for the 4-site Hubbard model at half filling, we manage to halve the total number of two-qubit gates. We anticipate that preparing strongly correlated states of larger systems, such as those studied in \cite{umrigar_2020}, will require higher-order UCC factors. Our approach should significantly reduce the gate counts for these circuits.
Similar strategies have been used to examine the decomposition of hardware efficient state-preparation protocols that preserve the particle number \cite{xanadu_2021}.

\section{Acknowledgements}
We acknowledge helpful discussions with Yan Wang, Ryan Bennink, and Eugene Dumitrescu. L. Xu and J. K. Freericks were supported by the U.S. Department of Energy, Office of Science, Office of Advanced Scientific Computing Research (ASCR), Quantum Computing Application Teams (QCATS) program, under field work proposal number ERKJ347. J. T. Lee was supported by the National Science Foundation under Grant No. DMR-1659532. J. K. Freericks was also supported by the McDevitt bequest at Georgetown University.

\bibliography{biblio}

\begin{thebibliography}{31}
\expandafter\ifx\csname natexlab\endcsname\relax\def\natexlab#1{#1}\fi
\expandafter\ifx\csname bibnamefont\endcsname\relax
  \def\bibnamefont#1{#1}\fi
\expandafter\ifx\csname bibfnamefont\endcsname\relax
  \def\bibfnamefont#1{#1}\fi
\expandafter\ifx\csname citenamefont\endcsname\relax
  \def\citenamefont#1{#1}\fi
\expandafter\ifx\csname url\endcsname\relax
  \def\url#1{\texttt{#1}}\fi
\expandafter\ifx\csname urlprefix\endcsname\relax\def\urlprefix{URL }\fi
\providecommand{\bibinfo}[2]{#2}
\providecommand{\eprint}[2][]{\url{#2}}

\bibitem[{\citenamefont{Preskill}(2018)}]{preskill_2018}
\bibinfo{author}{\bibfnamefont{J.}~\bibnamefont{Preskill}},
  \bibinfo{journal}{Quantum} \textbf{\bibinfo{volume}{2}}, \bibinfo{pages}{79}
  (\bibinfo{year}{2018}).

\bibitem[{\citenamefont{Aspuru-Guzik}(2005)}]{aspuru-guzik_2005}
\bibinfo{author}{\bibfnamefont{A.}~\bibnamefont{Aspuru-Guzik}},
  \bibinfo{journal}{Science} \textbf{\bibinfo{volume}{309}},
  \bibinfo{pages}{1704–1707} (\bibinfo{year}{2005}).

\bibitem[{\citenamefont{Lloyd}(1996)}]{lloyd_1996}
\bibinfo{author}{\bibfnamefont{S.}~\bibnamefont{Lloyd}},
  \bibinfo{journal}{Science} \textbf{\bibinfo{volume}{273}},
  \bibinfo{pages}{1073–1078} (\bibinfo{year}{1996}).

\bibitem[{\citenamefont{Lee et~al.}(2018)\citenamefont{Lee, Huggins,
  Head-Gordon, and Whaley}}]{lee_huggins_head-gordon_whaley_2018}
\bibinfo{author}{\bibfnamefont{J.}~\bibnamefont{Lee}},
  \bibinfo{author}{\bibfnamefont{W.~J.} \bibnamefont{Huggins}},
  \bibinfo{author}{\bibfnamefont{M.}~\bibnamefont{Head-Gordon}},
  \bibnamefont{and} \bibinfo{author}{\bibfnamefont{K.~B.}
  \bibnamefont{Whaley}}, \bibinfo{journal}{Journal of Chemical Theory and
  Computation} \textbf{\bibinfo{volume}{15}}, \bibinfo{pages}{311–324}
  (\bibinfo{year}{2018}).

\bibitem[{\citenamefont{Sherrill and Schaefer}(1999)}]{fci}
\bibinfo{author}{\bibfnamefont{C.~D.} \bibnamefont{Sherrill}} \bibnamefont{and}
  \bibinfo{author}{\bibfnamefont{H.~F.} \bibnamefont{Schaefer}},
  \bibinfo{journal}{Advances in Quantum Chemistry} p.
  \bibinfo{pages}{143–269} (\bibinfo{year}{1999}).

\bibitem[{\citenamefont{Bartlett and Musiał}(2007)}]{bartlett_2007}
\bibinfo{author}{\bibfnamefont{R.~J.} \bibnamefont{Bartlett}} \bibnamefont{and}
  \bibinfo{author}{\bibfnamefont{M.}~\bibnamefont{Musiał}},
  \bibinfo{journal}{Rev. Mod. Phys.} \textbf{\bibinfo{volume}{79}},
  \bibinfo{pages}{291–352} (\bibinfo{year}{2007}).

\bibitem[{\citenamefont{Peruzzo et~al.}(2014)\citenamefont{Peruzzo, Mcclean,
  Shadbolt, Yung, Zhou, Love, Aspuru-Guzik, and O’Brien}}]{vqe}
\bibinfo{author}{\bibfnamefont{A.}~\bibnamefont{Peruzzo}},
  \bibinfo{author}{\bibfnamefont{J.}~\bibnamefont{Mcclean}},
  \bibinfo{author}{\bibfnamefont{P.}~\bibnamefont{Shadbolt}},
  \bibinfo{author}{\bibfnamefont{M.-H.} \bibnamefont{Yung}},
  \bibinfo{author}{\bibfnamefont{X.-Q.} \bibnamefont{Zhou}},
  \bibinfo{author}{\bibfnamefont{P.~J.} \bibnamefont{Love}},
  \bibinfo{author}{\bibfnamefont{A.}~\bibnamefont{Aspuru-Guzik}},
  \bibnamefont{and} \bibinfo{author}{\bibfnamefont{J.~L.}
  \bibnamefont{O’Brien}}, \bibinfo{journal}{Nature Commun.}
  \textbf{\bibinfo{volume}{5}} (\bibinfo{year}{2014}).

\bibitem[{\citenamefont{Stair and Evangelista}(2021)}]{stair_evangelista_2021}
\bibinfo{author}{\bibfnamefont{N.~H.} \bibnamefont{Stair}} \bibnamefont{and}
  \bibinfo{author}{\bibfnamefont{F.~A.} \bibnamefont{Evangelista}},
  \bibinfo{journal}{PRX Quantum} \textbf{\bibinfo{volume}{2}}
  (\bibinfo{year}{2021}).

\bibitem[{\citenamefont{Bartlett et~al.}(1989)\citenamefont{Bartlett,
  Kucharski, and Noga}}]{bartlett_kucharski_noga_1989}
\bibinfo{author}{\bibfnamefont{R.~J.} \bibnamefont{Bartlett}},
  \bibinfo{author}{\bibfnamefont{S.~A.} \bibnamefont{Kucharski}},
  \bibnamefont{and} \bibinfo{author}{\bibfnamefont{J.}~\bibnamefont{Noga}},
  \bibinfo{journal}{Chemical Physics Letters} \textbf{\bibinfo{volume}{155}},
  \bibinfo{pages}{133–140} (\bibinfo{year}{1989}).

\bibitem[{\citenamefont{Schaefer}(2013)}]{schaefer_2013}
\bibinfo{author}{\bibfnamefont{H.~F.} \bibnamefont{Schaefer}},
  \emph{\bibinfo{title}{Methods of electronic structure theory}}
  (\bibinfo{publisher}{Springer Science Business Media, LLC},
  \bibinfo{year}{2013}).

\bibitem[{\citenamefont{Evangelista}(2011)}]{evangelista_2011}
\bibinfo{author}{\bibfnamefont{F.~A.} \bibnamefont{Evangelista}},
  \bibinfo{journal}{The Journal of Chemical Physics}
  \textbf{\bibinfo{volume}{134}}, \bibinfo{pages}{224102}
  (\bibinfo{year}{2011}).

\bibitem[{\citenamefont{Chen et~al.}(2021)\citenamefont{Chen, Cheng, and
  Freericks}}]{chen_cheng_freericks_2021}
\bibinfo{author}{\bibfnamefont{J.}~\bibnamefont{Chen}},
  \bibinfo{author}{\bibfnamefont{H.-P.} \bibnamefont{Cheng}}, \bibnamefont{and}
  \bibinfo{author}{\bibfnamefont{J.~K.} \bibnamefont{Freericks}},
  \bibinfo{journal}{Journal of Chemical Theory and Computation}
  \textbf{\bibinfo{volume}{17}}, \bibinfo{pages}{841–847}
  (\bibinfo{year}{2021}).

\bibitem[{\citenamefont{Childs et~al.}(2017)\citenamefont{Childs, Kothari, and
  Somma}}]{childs_kothari_somma_2017}
\bibinfo{author}{\bibfnamefont{A.~M.} \bibnamefont{Childs}},
  \bibinfo{author}{\bibfnamefont{R.}~\bibnamefont{Kothari}}, \bibnamefont{and}
  \bibinfo{author}{\bibfnamefont{R.~D.} \bibnamefont{Somma}},
  \bibinfo{journal}{SIAM Journal on Computing} \textbf{\bibinfo{volume}{46}},
  \bibinfo{pages}{1920–1950} (\bibinfo{year}{2017}).

\bibitem[{\citenamefont{Jordan and Wigner}(1928)}]{jordan_wigner_1928}
\bibinfo{author}{\bibfnamefont{P.}~\bibnamefont{Jordan}} \bibnamefont{and}
  \bibinfo{author}{\bibfnamefont{E.}~\bibnamefont{Wigner}},
  \bibinfo{journal}{Zeitschrift für Physik} \textbf{\bibinfo{volume}{47}},
  \bibinfo{pages}{631–651} (\bibinfo{year}{1928}).

\bibitem[{\citenamefont{Nielsen}(2005)}]{nielsen_2005}
\bibinfo{author}{\bibfnamefont{M.~A.} \bibnamefont{Nielsen}},
  \emph{\bibinfo{title}{The fermionic canonical commutation relations and the
  jordan-wigner transform}} (\bibinfo{year}{2005}),
  \urlprefix\url{https://michaelnielsen.org/blog/archive/notes/fermions_and_jordan_wigner.pdf}.

\bibitem[{\citenamefont{Barkoutsos et~al.}(2018)\citenamefont{Barkoutsos,
  Gonthier, Sokolov, Moll, Salis, Fuhrer, Ganzhorn, Egger, Troyer, Mezzacapo
  et~al.}}]{barkoutsos_2018}
\bibinfo{author}{\bibfnamefont{P.~K.} \bibnamefont{Barkoutsos}},
  \bibinfo{author}{\bibfnamefont{J.~F.} \bibnamefont{Gonthier}},
  \bibinfo{author}{\bibfnamefont{I.}~\bibnamefont{Sokolov}},
  \bibinfo{author}{\bibfnamefont{N.}~\bibnamefont{Moll}},
  \bibinfo{author}{\bibfnamefont{G.}~\bibnamefont{Salis}},
  \bibinfo{author}{\bibfnamefont{A.}~\bibnamefont{Fuhrer}},
  \bibinfo{author}{\bibfnamefont{M.}~\bibnamefont{Ganzhorn}},
  \bibinfo{author}{\bibfnamefont{D.~J.} \bibnamefont{Egger}},
  \bibinfo{author}{\bibfnamefont{M.}~\bibnamefont{Troyer}},
  \bibinfo{author}{\bibfnamefont{A.}~\bibnamefont{Mezzacapo}},
  \bibnamefont{et~al.}, \bibinfo{journal}{Phys. Rev. A}
  \textbf{\bibinfo{volume}{98}} (\bibinfo{year}{2018}).

\bibitem[{\citenamefont{Romero et~al.}(2018)\citenamefont{Romero, Babbush,
  Mcclean, Hempel, Love, and
  Aspuru-Guzik}}]{romero_babbush_mcclean_hempel_love_aspuru-guzik_2018}
\bibinfo{author}{\bibfnamefont{J.}~\bibnamefont{Romero}},
  \bibinfo{author}{\bibfnamefont{R.}~\bibnamefont{Babbush}},
  \bibinfo{author}{\bibfnamefont{J.~R.} \bibnamefont{Mcclean}},
  \bibinfo{author}{\bibfnamefont{C.}~\bibnamefont{Hempel}},
  \bibinfo{author}{\bibfnamefont{P.~J.} \bibnamefont{Love}}, \bibnamefont{and}
  \bibinfo{author}{\bibfnamefont{A.}~\bibnamefont{Aspuru-Guzik}},
  \bibinfo{journal}{Quantum Sci. and Technol.} \textbf{\bibinfo{volume}{4}},
  \bibinfo{pages}{014008} (\bibinfo{year}{2018}).

\bibitem[{\citenamefont{Evangelista et~al.}(2019)\citenamefont{Evangelista,
  Chan, and Scuseria}}]{evangelista_chan_scuseria_2019}
\bibinfo{author}{\bibfnamefont{F.~A.} \bibnamefont{Evangelista}},
  \bibinfo{author}{\bibfnamefont{G.~K.-L.} \bibnamefont{Chan}},
  \bibnamefont{and} \bibinfo{author}{\bibfnamefont{G.~E.}
  \bibnamefont{Scuseria}}, \bibinfo{journal}{J. Chem. Phys.}
  \textbf{\bibinfo{volume}{151}}, \bibinfo{pages}{244112}
  (\bibinfo{year}{2019}).

\bibitem[{\citenamefont{Xu et~al.}(2020)\citenamefont{Xu, Lee, and
  Freericks}}]{xu_lee_freericks_2020}
\bibinfo{author}{\bibfnamefont{L.}~\bibnamefont{Xu}},
  \bibinfo{author}{\bibfnamefont{J.~T.} \bibnamefont{Lee}}, \bibnamefont{and}
  \bibinfo{author}{\bibfnamefont{J.~K.} \bibnamefont{Freericks}},
  \bibinfo{journal}{Modern Physics Lettters B} \textbf{\bibinfo{volume}{34}},
  \bibinfo{pages}{2040049} (\bibinfo{year}{2020}).

\bibitem[{\citenamefont{Helgaker}(2014)}]{helgaker_2014}
\bibinfo{author}{\bibfnamefont{T.~J.} \bibnamefont{Helgaker}},
  \emph{\bibinfo{title}{Molecular Electronic-Structure Theory}}
  (\bibinfo{publisher}{Wiley}, \bibinfo{year}{2014}).

\bibitem[{\citenamefont{Szabo and Ostlund}(2006)}]{szabo_ostlund_2006}
\bibinfo{author}{\bibfnamefont{A.}~\bibnamefont{Szabo}} \bibnamefont{and}
  \bibinfo{author}{\bibfnamefont{N.~S.} \bibnamefont{Ostlund}},
  \emph{\bibinfo{title}{Modern quantum chemistry: introduction to advanced
  electronic structure theory}} (\bibinfo{publisher}{Dover Publications},
  \bibinfo{year}{2006}).

\bibitem[{\citenamefont{Taketa et~al.}(1966)\citenamefont{Taketa, Huzinaga, and
  O-Ohata}}]{taketa_huzinaga_o-ohata_1966}
\bibinfo{author}{\bibfnamefont{H.}~\bibnamefont{Taketa}},
  \bibinfo{author}{\bibfnamefont{S.}~\bibnamefont{Huzinaga}}, \bibnamefont{and}
  \bibinfo{author}{\bibfnamefont{K.}~\bibnamefont{O-Ohata}},
  \bibinfo{journal}{Journal of the Physical Society of Japan}
  \textbf{\bibinfo{volume}{21}}, \bibinfo{pages}{2313–2324}
  (\bibinfo{year}{1966}).

\bibitem[{\citenamefont{Taube and Bartlett}(2006)}]{taube_bertlett_2006}
\bibinfo{author}{\bibfnamefont{A.~G.} \bibnamefont{Taube}} \bibnamefont{and}
  \bibinfo{author}{\bibfnamefont{R.~J.} \bibnamefont{Bartlett}},
  \bibinfo{journal}{International Journal of Quantum Chemistry}
  \textbf{\bibinfo{volume}{106}}, \bibinfo{pages}{3393} (\bibinfo{year}{2006}),
  \eprint{https://onlinelibrary.wiley.com/doi/pdf/10.1002/qua.21198},
  \urlprefix\url{https://onlinelibrary.wiley.com/doi/abs/10.1002/qua.21198}.

\bibitem[{\citenamefont{Kutzelnigg}(1991)}]{kutzelnigg_1991}
\bibinfo{author}{\bibfnamefont{W.}~\bibnamefont{Kutzelnigg}},
  \bibinfo{journal}{Theoretica Chimica Acta} \textbf{\bibinfo{volume}{80}},
  \bibinfo{pages}{349–386} (\bibinfo{year}{1991}).

\bibitem[{\citenamefont{Sokolov et~al.}(2020)\citenamefont{Sokolov, Barkoutsos,
  Ollitrault, Greenberg, Rice, Pistoia, and Tavernelli}}]{sokolov_2020}
\bibinfo{author}{\bibfnamefont{I.~O.} \bibnamefont{Sokolov}},
  \bibinfo{author}{\bibfnamefont{P.~K.} \bibnamefont{Barkoutsos}},
  \bibinfo{author}{\bibfnamefont{P.~J.} \bibnamefont{Ollitrault}},
  \bibinfo{author}{\bibfnamefont{D.}~\bibnamefont{Greenberg}},
  \bibinfo{author}{\bibfnamefont{J.}~\bibnamefont{Rice}},
  \bibinfo{author}{\bibfnamefont{M.}~\bibnamefont{Pistoia}}, \bibnamefont{and}
  \bibinfo{author}{\bibfnamefont{I.}~\bibnamefont{Tavernelli}},
  \bibinfo{journal}{J. Chem. Phys.} \textbf{\bibinfo{volume}{152}},
  \bibinfo{pages}{124107} (\bibinfo{year}{2020}).

\bibitem[{\citenamefont{Somma et~al.}(2002)\citenamefont{Somma, Ortiz,
  Gubernatis, Knill, and
  Laflamme}}]{somma_ortiz_gubernatis_knill_laflamme_2002}
\bibinfo{author}{\bibfnamefont{R.}~\bibnamefont{Somma}},
  \bibinfo{author}{\bibfnamefont{G.}~\bibnamefont{Ortiz}},
  \bibinfo{author}{\bibfnamefont{J.~E.} \bibnamefont{Gubernatis}},
  \bibinfo{author}{\bibfnamefont{E.}~\bibnamefont{Knill}}, \bibnamefont{and}
  \bibinfo{author}{\bibfnamefont{R.}~\bibnamefont{Laflamme}},
  \bibinfo{journal}{Phys. Rev. A} \textbf{\bibinfo{volume}{65}}
  (\bibinfo{year}{2002}).

\bibitem[{\citenamefont{Nielsen and Chuang}(2019)}]{nielsen_chuang_2019}
\bibinfo{author}{\bibfnamefont{M.~A.} \bibnamefont{Nielsen}} \bibnamefont{and}
  \bibinfo{author}{\bibfnamefont{I.~L.} \bibnamefont{Chuang}},
  \emph{\bibinfo{title}{Quantum computation and quantum information}}
  (\bibinfo{publisher}{Cambridge University Press}, \bibinfo{year}{2019}).

\bibitem[{\citenamefont{Barenco et~al.}(1995)\citenamefont{Barenco, Bennett,
  Cleve, DiVincenzo, Margolus, Shor, Sleator, Smolin, and
  Weinfurter}}]{elementary_gates}
\bibinfo{author}{\bibfnamefont{A.}~\bibnamefont{Barenco}},
  \bibinfo{author}{\bibfnamefont{C.~H.} \bibnamefont{Bennett}},
  \bibinfo{author}{\bibfnamefont{R.}~\bibnamefont{Cleve}},
  \bibinfo{author}{\bibfnamefont{D.~P.} \bibnamefont{DiVincenzo}},
  \bibinfo{author}{\bibfnamefont{N.}~\bibnamefont{Margolus}},
  \bibinfo{author}{\bibfnamefont{P.}~\bibnamefont{Shor}},
  \bibinfo{author}{\bibfnamefont{T.}~\bibnamefont{Sleator}},
  \bibinfo{author}{\bibfnamefont{J.~A.} \bibnamefont{Smolin}},
  \bibnamefont{and}
  \bibinfo{author}{\bibfnamefont{H.}~\bibnamefont{Weinfurter}},
  \bibinfo{journal}{Phys. Rev. A} \textbf{\bibinfo{volume}{52}},
  \bibinfo{pages}{3457–3467} (\bibinfo{year}{1995}).

\bibitem[{\citenamefont{Arrazola et~al.}(2021)\citenamefont{Arrazola,
  Di~Matteo, Quesada, Jahangiri, Delgado, and Killoran}}]{xanadu_2021}
\bibinfo{author}{\bibfnamefont{J.~M.} \bibnamefont{Arrazola}},
  \bibinfo{author}{\bibfnamefont{O.}~\bibnamefont{Di~Matteo}},
  \bibinfo{author}{\bibfnamefont{N.}~\bibnamefont{Quesada}},
  \bibinfo{author}{\bibfnamefont{S.}~\bibnamefont{Jahangiri}},
  \bibinfo{author}{\bibfnamefont{A.}~\bibnamefont{Delgado}}, \bibnamefont{and}
  \bibinfo{author}{\bibfnamefont{N.}~\bibnamefont{Killoran}},
  \bibinfo{journal}{arXiv.org}  (\bibinfo{year}{2021}),
  \urlprefix\url{https://arxiv.org/abs/2106.13839}.

\bibitem[{\citenamefont{Anand et~al.}(2021)\citenamefont{Anand, Schleich,
  Alperin-Lea, Jensen, Sim, Díaz-Tinoco, Kottmann, Degroote, Izmaylov,
  Aspuru-Guzik et~al.}}]{aspuru-guzik_uccreview}
\bibinfo{author}{\bibfnamefont{A.}~\bibnamefont{Anand}},
  \bibinfo{author}{\bibfnamefont{P.}~\bibnamefont{Schleich}},
  \bibinfo{author}{\bibfnamefont{S.}~\bibnamefont{Alperin-Lea}},
  \bibinfo{author}{\bibfnamefont{P.~W.~K.} \bibnamefont{Jensen}},
  \bibinfo{author}{\bibfnamefont{S.}~\bibnamefont{Sim}},
  \bibinfo{author}{\bibfnamefont{M.}~\bibnamefont{Díaz-Tinoco}},
  \bibinfo{author}{\bibfnamefont{J.~S.} \bibnamefont{Kottmann}},
  \bibinfo{author}{\bibfnamefont{M.}~\bibnamefont{Degroote}},
  \bibinfo{author}{\bibfnamefont{A.~F.} \bibnamefont{Izmaylov}},
  \bibinfo{author}{\bibfnamefont{A.}~\bibnamefont{Aspuru-Guzik}},
  \bibnamefont{et~al.}, \bibinfo{journal}{arXiv.org}  (\bibinfo{year}{2021}),
  \urlprefix\url{https://arxiv.org/abs/2109.15176}.

\bibitem[{\citenamefont{Li et~al.}(2020)\citenamefont{Li, Yao, Holmes, Otten,
  Sun, Sharma, and Umrigar}}]{umrigar_2020}
\bibinfo{author}{\bibfnamefont{J.}~\bibnamefont{Li}},
  \bibinfo{author}{\bibfnamefont{Y.}~\bibnamefont{Yao}},
  \bibinfo{author}{\bibfnamefont{A.~A.} \bibnamefont{Holmes}},
  \bibinfo{author}{\bibfnamefont{M.}~\bibnamefont{Otten}},
  \bibinfo{author}{\bibfnamefont{Q.}~\bibnamefont{Sun}},
  \bibinfo{author}{\bibfnamefont{S.}~\bibnamefont{Sharma}}, \bibnamefont{and}
  \bibinfo{author}{\bibfnamefont{C.~J.} \bibnamefont{Umrigar}},
  \bibinfo{journal}{Phys. Rev. Research} \textbf{\bibinfo{volume}{2}}
  (\bibinfo{year}{2020}).

\end{thebibliography}
\bibliographystyle{apsrev}

\hfill

\appendix

\section{Decomposition of the standard quadruple circuit}
In the extended figures, we show how one may implement the decomposed quad. Figure \ref{quad schematic} shows the generic order of UCC operators needed to implement the decomposed quad. It starts off with two doubles (indicated in figure \ref{quad schematic gate 1} and figure \ref{quad schematic gate 2}). Then, the doubly-controlled UCC doubles is applied. This circuit is given in figure \ref{doubly controlled double}. The following two blocks are the conjugate of the UCC factor of the previous two blocks. Thus, the circuit can be constructed by swapping $U_1 \leftrightarrow U_3$, and $U_2 \leftrightarrow U_4$.

\begin{figure}[htb]
\includegraphics[width=0.45\textwidth]{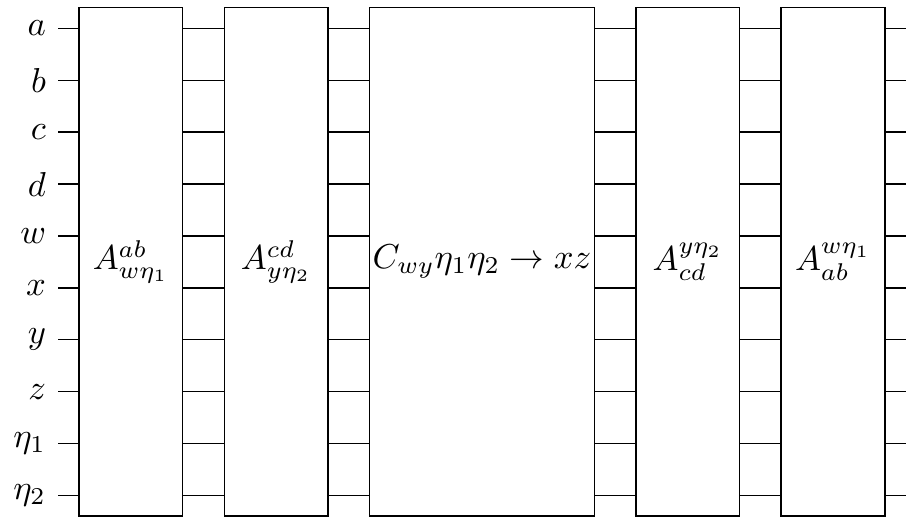}
\centering
\caption{Schematic of how the decomposed quad excitation would be implemented.}
\label{quad schematic}
\end{figure}

\begin{figure}[htb]
\includegraphics[width=0.45\textwidth]{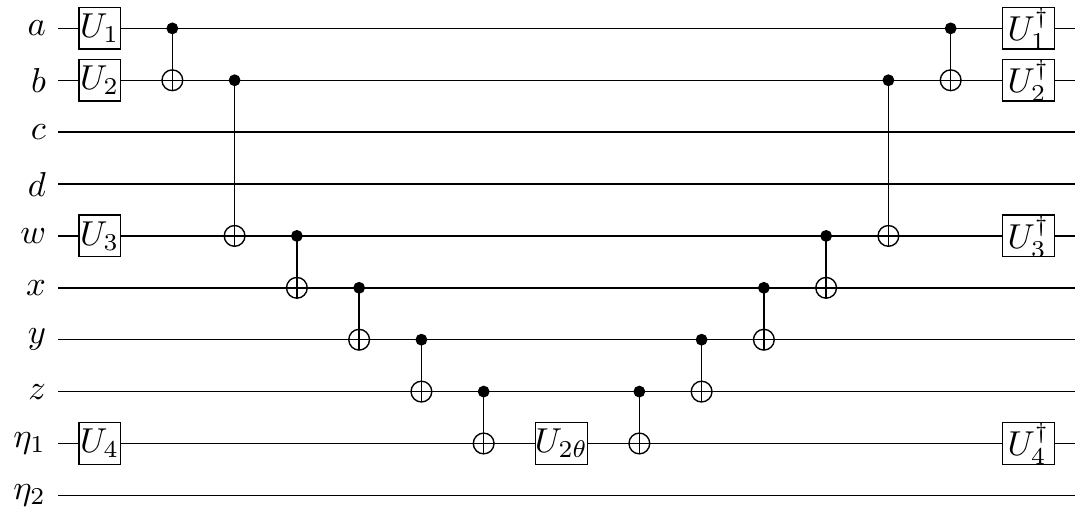}
\centering
\caption{Schematic of the first UCC factor in the decomposed quad.}
\label{quad schematic gate 1}
\end{figure}

\begin{figure}[htb]
\includegraphics[width=0.45\textwidth]{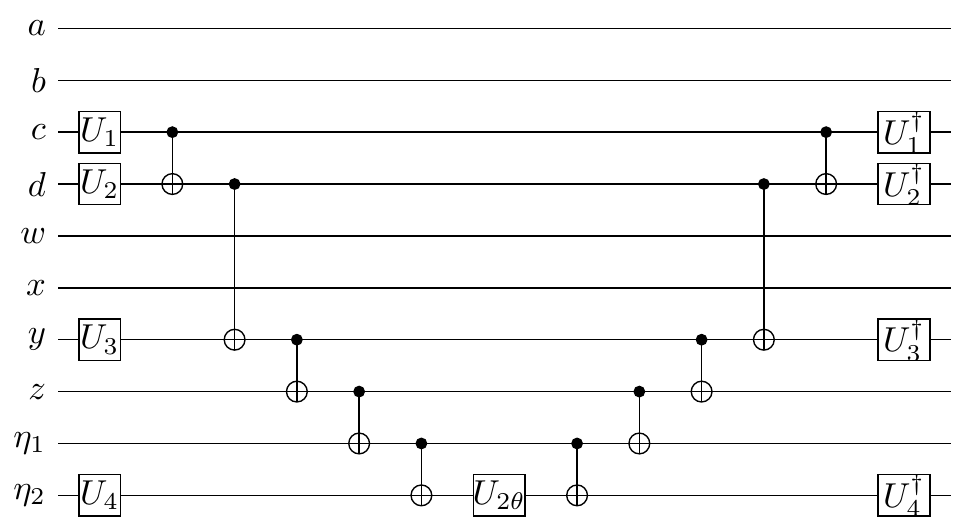}
\centering
\caption{Schematic of the second UCC factor in the decomposed quad.}
\label{quad schematic gate 2}
\end{figure}

\end{document}